\journal{Journal of \LaTeX\ Templates}
\newcommand{\Rmnum}[1]{\expandafter\@slowromancap\romannumeral #1@}
\newtheorem{theorem}{Theorem}
\newtheorem{remark}{Remark}
\newtheorem{lemma}[theorem]{Lemma}
\newtheorem{example}[theorem]{Example}
\newcommand{\GRS}{{\mathrm{GRS}}}
\newcommand{\Hull}{{\mathrm{Hull}}}
\newcommand{\C}{{\mathcal{C}}}
\newcommand{\F}{{\mathbb{F}}}
\newcommand{\bc}{{{\bf c}}}
\begin{document}

\begin{frontmatter}

\title{On MDS Codes With Galois Hulls of Arbitrary Dimensions}
\tnotetext[mytitlenote]{This research was supported by the National Natural Science Foundation of China under Grant Nos.U21A20428 and 12171134.}


\author[address1]{Yang Li}
\ead{yanglimath@163.com}

\author[address1]{Shixin Zhu\corref{mycorrespondingauthor}}
\cortext[mycorrespondingauthor]{Corresponding author}
\ead{zhushixinmath@hfut.edu.cn}

\author[address1]{Ping Li}
\ead{lpmath@126.com}

\address[address1]{School of Mathematics, HeFei University of Technology, Hefei 230601, China}

\begin{abstract}
The Galois hulls of linear codes are a generalization of the Euclidean and Hermitian hulls of linear codes. In this paper, we study the Galois hulls of (extended) GRS codes 
and present several new constructions of MDS codes with Galois hulls of arbitrary dimensions via (extended) GRS codes. Two general methods of constructing MDS codes 
with Galois hulls of arbitrary dimensions by Hermitian or general Galois self-orthogonal (extended) GRS codes are given. Using these methods, some MDS codes with larger dimensions 
and Galois hulls of arbitrary dimensions can be obtained and relatively strict conditions can also lead to many new classes of MDS codes with Galois hulls of arbitrary dimensions.
\end{abstract}

\begin{keyword}
Generalized Reed-Solomon codes\sep Galois hulls\sep Hermitian self-orthogonal\sep Galois self-orthogonal
\MSC[2010] 12E20\sep  81p70
\end{keyword}

\end{frontmatter}


\section{Introduction}\label{sec-introduction}

Throughout this paper, let $q=p^h$, where $p$ is a prime and $h$ is a positive integer. Let $\F_q$ be the finite field with $q$ elements. Let $\bc_1=(c_{11},c_{12},\ldots,c_{1n})$ and 
$\bc_2=(c_{21},c_{22},\ldots,c_{2n})\in \F_q^n$. The $e$-Galois inner product of $\bc_1$ and $\bc_2$ is defined by $(\bc_1,\bc_2)_e=c_{11}c_{21}^{p^e}+c_{21}c_{22}^{p^e}+\cdots+c_{1n}c_{2n}^{p^e}$, 
where $0\leq e\leq h-1$, which was first introduced by Fan et al. \cite[]{RefJ95}. In particular, if $e=0$, the $e$-Galois inner product coincides with the usual Euclidean inner 
product. Moreover, if $h$ is even and $e=h/2$, the $e$-Galois inner product coincides with the Hermitian inner product. 

An $[n,k,d]_q$ linear code $\C$ is a $k$-dimensional subspace of $\F_q^n$ with minimum distance $d$. Its $e$-Galois dual code, denoted by $\C^{\bot_e}$, is defined by 
$\C^{\bot_e}=\{ \bc_1 \in \F_q^n:\ (\bc_1,\bc_2)_e=0,\ {\rm for \ all} \ \bc_2\in \C \}.$ The $e$-Galois hull of the code $\C$ is defined by $\Hull_e(\C)=\C\cap \C^{\bot_e}$. 
If $e=0$, $\Hull_e(\C)$ is called the Euclidean hull of $\C$. If $h$ is even and $e=h/2$, $\Hull_e(\C)$ is called the Hermitian hull of $\C$. Hence, $e$-Galois hulls of linear 
codes are a generalization of the Euclidean and Hermitian hulls of linear codes.

For the Euclidean hull of linear codes, it was first introduced by Assmus \cite{RefJ1} to classify finite projective planes. It had been shown that the hull of linear codes plays an important role 
in many other aspects of coding theory, such as determining the complexity of algorithms for checking the permutation equivalence of two linear codes, computing the automorphism group of a linear code 
and increasing a security level against side channel attacks and fault injection attacks \cite{RefJ100,RefJ8,RefJ9,RefJ10,RefJ11}. They usually work very well when the dimension of the Euclidean hull of a 
linear code is small. 

With the development of quantum computation and quantum communication, how to construct quantum codes to counteract the noise in quantum channels becomes an important and difficult problem in quantum 
information theory. In 1996, an explicit method, called CSS construction, proposed by Calderbank et al. \cite[]{RefJ-1} and Steane \cite[]{RefJ-2}, makes it possible to construct quantum stabilizer 
codes from certain self-orthogonal or dual containing codes. However, the self-orthogonality is often difficult to obtain. About ten years later, Brun et al. \cite{RefJ14} introduced 
entanglement-assisted quantum error-correction codes (EAQECCs). In their constructions, an EAQECC can be derived from any classical linear codes with the help of pre-shared entanglement between the 
encoder and decoder. However, the determination of the number of shared pairs that required is usually difficult. Fortunately, one found certain relationships between this number and the dimension of 
the hull of a linear code. Specifically, we refer to \cite[]{RefJ4} for the usual Euclidean and classical Hermitian cases, and \cite[]{RefJ27} for the general Galois case. Based on these outstanding 
results, a large number of MDS codes (i.e., $d=n-k+1$) with Euclidean hulls and Hermitian hulls of arbitrary dimensions were constructed and so many EAQECCs were given in 
\cite[]{RefJ21,RefJ7,RefJ5,RefJ19,RefJ6,RefJ16,RefJ18,RefJ17} and references therein.

However, up to the authors' knowledge, there seems to be little research on MDS codes with Galois hulls of arbitrary dimensions. About only twenty classes were construted by Qian et al., Cao and Fang et 
al. in \cite[]{RefJ29,RefJ30,RefJ94}. Very recently, Wu et al. \cite[]{RefJ90} studied the Galois hulls of generalized Reed-Solomon (GRS) codes. 
They proved that the Galois hulls of some GRS 
codes are still GRS codes when $\mathfrak{L} =\mathfrak{L}^{p^{h-e}}$ in terms of Goppa codes. In their conditions, the dimensions of Galois hulls are also relatively flexible. Considering 
the excellent properties of the hull of linear codes and its important applications in coding theory, as well as relatively little research under Galois inner product, it is necessary to study 
and construct linear codes with Galois hulls of arbitrary dimensions, especially MDS codes with Galois hulls of arbitrary dimensions. 

In this paper, we study the Galois hulls of (extended) GRS codes and present some new constructions of MDS codes with Galois hulls of arbitrary dimensions via (extended) GRS codes. For reference, 
we list the parameters of all known MDS codes with Galois hulls of arbitrary dimensions in Table \ref{tab:1} and the new ones in Table \ref{tab:2}.

The rest of this paper is organized as follows. Some basic knowledge about (extended) GRS codes are introduced in Section \ref{sec2}. The six new constructions of MDS codes with Galois hulls of 
arbitrary dimensions are discussed in Section \ref{sec3}. And finally, Section \ref{sec4} concludes this paper.

\newcommand{\tabincell}[2]{\begin{tabular}{@{}#1@{}}#2\end{tabular}}
\begin{table}
\centering
\caption{Known constructions on MDS codes with Galois hulls of arbitrary dimensions}
\label{tab:1}    

\begin{center}
  \resizebox{160mm}{100mm}{
	\begin{tabular}{ccccc}
		\hline
	  Class & $q$-Ary & Code length $n$ & Dimension $k$ & Ref.\\
		\hline
   1 & $q=p^h$ is even & $n\leq q$, $\frac{m}{\gcd(e,m)}$ and $m>1$ & $1\leq k\leq \lfloor \frac{p^e+n-1}{p^e+1} \rfloor$ & \cite{RefJ94}  \\

   2 & $q=p^h>3$ & $n\leq r$, $r=p^m$ with $m\mid h$ and $(p^e+1)\mid \frac{q-1}{r-1}$ & $1\leq k\leq \lfloor \frac{p^e+n-1}{p^e+1} \rfloor$ & \cite{RefJ94}  \\

   3 & $q=p^h>3$ & $n\mid q$ & $1\leq k\leq \lfloor \frac{p^e+n-1}{p^e+1} \rfloor$ & \cite{RefJ94}  \\

   4 & $q=p^h>3$ & $(n-1)\mid (q-1)$ & $1\leq k\leq \lfloor \frac{p^e+n-1}{p^e+1} \rfloor$ & \cite{RefJ94}  \\

   5 & $q=p^h>3$ & $n\mid (q-1)$ & $1\leq k\leq \lfloor \frac{p^e+n-1}{p^e+1} \rfloor$ & \cite{RefJ94}  \\

   6 & $q=p^h$ is odd, $2e\mid h$ & \tabincell{c}{$n=\frac{r(q-1)}{\gcd(x_2,q-1)}$, $1\leq r\leq \frac{q-1}{\gcd(x_1,q-1)}$,\\ $(q-1)\mid {\rm lcm}(x_1,x_2)$ and $\frac{q-1}{p^e-1}\mid x_1$} & $1\leq k\leq \lfloor \frac{p^e+n}{p^e+1} \rfloor$ & \cite{RefJ29}  \\

   7 & $q=p^h$ is odd, $2e\mid h$ & \tabincell{c}{$n=\frac{r(q-1)}{\gcd(x_2,q-1)}+1$, $1\leq r\leq \frac{q-1}{\gcd(x_1,q-1)}$,\\ $(q-1)\mid {\rm lcm}(x_1,x_2)$ and $\frac{q-1}{p^e-1}\mid x_1$} & $1\leq k\leq \lfloor \frac{p^e+n}{p^e+1} \rfloor$ & \cite{RefJ29}  \\

   8 & $q=p^h$ is odd, $2e\mid h$ & \tabincell{c}{$n=\frac{r(q-1)}{\gcd(x_2,q-1)}+2$, $1\leq r\leq \frac{q-1}{\gcd(x_1,q-1)}$,\\ $(q-1)\mid {\rm lcm}(x_1,x_2)$ and $\frac{q-1}{p^e-1}\mid x_1$} & $1\leq k\leq \lfloor \frac{p^e+n}{p^e+1} \rfloor$ & \cite{RefJ29}  \\

   9 & $q=p^h$ is odd, $2e\mid h$ & \tabincell{c}{$n=rm$, $1\leq r\leq \frac{p^e-1}{m_1}$,\\ $m_1=\frac{m}{\gcd(m,y)}$, $m\mid (q-1)$ and $y=\frac{q-1}{p^e-1}$} & $1\leq k\leq \lfloor \frac{p^e+n}{p^e+1} \rfloor$ & \cite{RefJ29}  \\
   
   10 & $q=p^h$ is odd, $2e\mid h$ & \tabincell{c}{$n=rm+1$, $1\leq r\leq \frac{p^e-1}{m_1}$,\\ $m_1=\frac{m}{\gcd(m,y)}$, $m\mid (q-1)$ and $y=\frac{q-1}{p^e-1}$} & $1\leq k\leq \lfloor \frac{p^e+n}{p^e+1} \rfloor$ & \cite{RefJ29}  \\

  11& $q=p^h$ is odd, $2e\mid h$ & \tabincell{c}{$n=rm+2$, $1\leq r\leq \frac{p^e-1}{m_1}$,\\ $m_1=\frac{m}{\gcd(m,y)}$, $m\mid (q-1)$ and $y=\frac{q-1}{p^e-1}$} & $1\leq k\leq \lfloor \frac{p^e+n}{p^e+1} \rfloor$ & \cite{RefJ29}  \\

  12 & $q=p^h$ is odd, $2e\mid h$ & $n=tp^{aw}$, $1\leq t\leq p^a$, $1\leq w\leq \frac{h}{a}-1$, $a\mid e$ & $1\leq k\leq \lfloor \frac{p^e+n-1}{p^e+1} \rfloor$ & \cite{RefJ29}\\

  13 & $q=p^h$ is odd, $2e\mid h$ & $n=tp^{aw}+1$, $1\leq t\leq p^a$, $1\leq w\leq \frac{h}{a}-1$, $a\mid e$ & $1\leq k\leq \lfloor \frac{p^e+n-1}{p^e+1} \rfloor$ & \cite{RefJ29}\\

  14 & $q=p^h$ is odd, $2e\mid h$ & $n=\frac{t(q-1)}{p^e-1}$, $1\leq t\leq p^e-1$ & $1\leq k\leq \lfloor \frac{p^e+n}{p^e+1} \rfloor$ & \cite{RefJ29}\\

  15 & $q=p^h$ is odd, $2e\mid h$ & $n=\frac{t(q-1)}{p^e-1}+1$, $1\leq t\leq p^e-1$ & $1\leq k\leq \lfloor \frac{p^e+n}{p^e+1} \rfloor$ & \cite{RefJ29}\\

  16 & $q=p^h$ is odd, $2e\mid h$ & $n=\frac{t(q-1)}{p^e-1}+2$, $1\leq t\leq p^e-1$ & $1\leq k\leq \lfloor \frac{p^e+n}{p^e+1} \rfloor$ & \cite{RefJ29}\\

  17 & $q=p^{em}$ is odd, $m$ is even & $n=tp^{er}$, $t\mid (p^e-1)$, $r\leq m-1$  & $1\leq k\leq \lfloor \frac{p^e+n-1}{p^e+1} \rfloor$ & \cite{RefJ30}\\

  18 & $q=p^{h}$ is odd, $2e\mid h$ & $n=tp^{h-e}$, $1\leq t\leq p^e$ & $1\leq k\leq \lfloor \frac{p^e+n-1}{p^e+1} \rfloor$ & \cite{RefJ30}\\

  19 & $q=p^{h}$ is odd & \tabincell{c}{$1\leq m\leq \lfloor \frac{n}{2}\rfloor$, $\frac{h}{e}$ is odd,\\ $\GRS_m(\boldsymbol{a},\boldsymbol{v})^{\bot_0}=\GRS_{n-m}(\boldsymbol{a},\boldsymbol{v})$}  & $1\leq k\leq \lfloor \frac{p^e+n-1}{p^e+1} \rfloor $ & \cite{RefJ30}\\
  
  20 & $q=p^{h}$ is odd & \tabincell{c}{$1\leq m\leq \lfloor \frac{n+1}{2}\rfloor$, $\frac{h}{e}$ is odd,\\ $\GRS_m(\boldsymbol{a},\boldsymbol{v},\infty)^{\bot_0}=\GRS_{n-m}(\boldsymbol{a},\boldsymbol{v},\infty)
         $} & $1\leq k\leq \lfloor \frac{p^e+n-1}{p^e+1} \rfloor$ & \cite{RefJ30}\\
  \hline
	\end{tabular}}
\end{center}

\end{table}

\begin{table}[!htb]
\centering
\caption{The new constructions of MDS codes with Galois hulls of arbitrary dimensions}
\label{tab:2}    

\begin{center}
  \resizebox{160mm}{45mm}{
	\begin{tabular}{ccccc}
		\hline
	  Class & $q$-Ary & Code length $n$ & Dimension $k$ & Ref.\\
		\hline
        1 &  $q=p^{h}\geq 5$  & \tabincell{c}{$\gcd(e',h)=e$, $\frac{h}{e}$ is even,\\ $\GRS_m(\boldsymbol{a},\boldsymbol{v})\subseteq \GRS_m(\boldsymbol{a},\boldsymbol{v})^{\bot_e}$} & \tabincell{c}{$1\leq k\leq \lfloor \frac{p^{e'}+n-1-deg(h(x))}{p^{e'}+1} \rfloor$, \\$deg(h(x))\leq n-m-1$} & Theorem \ref{th.ConA} 1)\\
    
        2 &   $q=p^{h}\geq 5$  & \tabincell{c}{$\gcd(e',h)=e$, $\frac{h}{e}$ is even, $m\geq 2$, \\ $\GRS_m(\boldsymbol{a},\boldsymbol{v},\infty)\subseteq \GRS_m(\boldsymbol{a},\boldsymbol{v},\infty)^{\bot_e}$} & \tabincell{c}{$1\leq k\leq \lfloor \frac{p^{e'}+n-1-deg(h(x))}{p^{e'}+1} \rfloor$,\\ $deg(h(x))\leq n-m-1$} & Theorem \ref{th.ConA} 2)\\
    
        3 &   $q=p^{h}$ is odd, $h$ is even & \tabincell{c}{$\frac{h}{\gcd(e',h)}$ is odd, \\$\GRS_m(\boldsymbol{a},\boldsymbol{v})\subseteq \GRS_m(\boldsymbol{a},\boldsymbol{v})^{\bot_{\frac{h}{2}}}$} & \tabincell{c}{$1\leq k\leq \lfloor \frac{p^{e'}+n-1-deg(h(x))}{p^{e'}+1} \rfloor$, \\$deg(h(x))\leq n-m-1$} & Theorem \ref{th.ConB.1} 1)\\
    
        4 &   $q=p^{h}$  is odd, $h$ is even & \tabincell{c}{$\frac{h}{\gcd(e',h)}$ is odd, $m\geq 2$, \\ $\GRS_m(\boldsymbol{a},\boldsymbol{v},\infty)\subseteq \GRS_m(\boldsymbol{a},\boldsymbol{v},\infty)^{\bot_{\frac{h}{2}}}$} & \tabincell{c}{$1\leq k\leq \lfloor \frac{p^{e'}+n-1-deg(h(x))}{p^{e'}+1} \rfloor$, \\$deg(h(x))\leq n-m-1$} & Theorem \ref{th.ConB.1} 2)\\
      
        5 & \tabincell{c}{$q=p^h$ is odd,\\ $2^t\mid \frac{h}{m}$, $2^t=p^e+1$} & \tabincell{c}{$n=wp^{mz}$, $1\leq w\leq p^m$, $1\leq z\leq \frac{h}{m}-1$} & $1\leq k\leq \lfloor \frac{p^e+n-1}{p^e+1} \rfloor$ & Theorem \ref{th.ConC} 1)\\
        
        6 & \tabincell{c}{$q=p^h$ is odd,\\ $2^t\mid \frac{h}{m}$, $2^t=p^e+1$} & \tabincell{c}{$n=wp^{mz}+1$, $1\leq w\leq p^m$, $1\leq z\leq \frac{h}{m}-1$} & $1\leq k\leq \lfloor \frac{p^e+n-1}{p^e+1} \rfloor$ & Theorem \ref{th.ConC} 2)\\    

  \hline
	\end{tabular}}
\end{center}
\end{table}

\section{Preliminary}\label{sec2}

Let $\C$ be an $[n,k,d]_q$ linear code. Then $\C$ is called an MDS code if $d=n-k+1$. Let's now introduce an important class of MDS codes.

Suppose that $\{a_1,a_2,\cdots,a_n\}$ are $n$ distinct elements of $\F_q$ and $\boldsymbol{a}=(a_1,a_2,\dots,a_n)$. Let $\mathbb{F}_q^*=\mathbb{F}_q\backslash \{0\}$. 
For a vector $\boldsymbol{v}=(v_1,v_2,\dots,v_n)\in (\mathbb{F}_q^*)^n$ and an integer $k\geq 0$, we define a generalized Reed-Solomon (GRS) code as
\begin{align*}
  \GRS_k(\boldsymbol{a},\boldsymbol{v})=\{(v_1f(a_1),v_2f(a_2),\dots,v_nf(a_n)):\ f(x)\in \mathbb{F}_q[x],\ \deg(f(x))\leq k-1\}.
\end{align*}
It is well known that $\GRS_k(\boldsymbol{a},\boldsymbol{v})$ is an $[n, k, n-k+1]_q$ MDS code. Usually, we call the elements $a_1,a_2,\dots,a_n$ the \emph{code locators} of $\GRS_k(\boldsymbol{a},\boldsymbol{v})$ 
and the elements $v_1,v_2,\dots,v_n$ the \emph{column multipliers} of $\GRS_k(\boldsymbol{a},\boldsymbol{v})$. Moreover, an extended GRS code, denoted by $\GRS_k(\boldsymbol{a},\boldsymbol{v}, \infty)$, is defined by
\begin{align*}
 \GRS_k(\boldsymbol{a},\boldsymbol{v}, \infty)=\{(v_1f(a_1),v_2f(a_2),\dots,v_nf(a_n), f_{k-1}):\ f(x)\in \mathbb{F}_q[x], \ \deg(f(x))\leq k-1\},
\end{align*}
where $f_{k-1}$ is the coefficient of $x^{k-1}$ in $f(x)$. It is easy to show that $\GRS_k(\boldsymbol{a},\boldsymbol{v},\infty)$ is an $[n+1, k, n-k+2]_q$ MDS code.  

Recall the definitions of $e$-Galois self-orthogonal GRS and extended GRS codes. Let $\GRS_k(\boldsymbol{a},\boldsymbol{v})^{\bot_e}$ 
(resp. $\GRS_k(\boldsymbol{a},\boldsymbol{v}, \infty)^{\bot_e}$) be the $e$-Galois dual code of $\GRS_k(\boldsymbol{a},\boldsymbol{v})$ 
(resp. $\GRS_k(\boldsymbol{a},\boldsymbol{v}, \infty)$). Then $\GRS_k(\boldsymbol{a},\boldsymbol{v})$ (resp. $\GRS_k(\boldsymbol{a},\boldsymbol{v}, \infty)$) 
is $e$-Galois self-orthogonal if $\GRS_k(\boldsymbol{a},\boldsymbol{v})\subseteq \GRS_k(\boldsymbol{a},\boldsymbol{v})^{\bot_e}$ 
(resp. $\GRS_k(\boldsymbol{a},\boldsymbol{v},\infty)\subseteq \GRS_k(\boldsymbol{a},\boldsymbol{v},\infty)^{\bot_e}$). Equivalently, we can also say 
$\GRS_k(\boldsymbol{a},\boldsymbol{v})$ (resp. $\GRS_k(\boldsymbol{a},\boldsymbol{v}, \infty)$) is $e$-Galois self-orthogonal if 
$\Hull_e(\GRS_k(\boldsymbol{a},\boldsymbol{v}))=\GRS_k(\boldsymbol{a},\boldsymbol{v})$ 
(resp. $\Hull_e(\GRS_k(\boldsymbol{a},\boldsymbol{v},\infty))=\GRS_k(\boldsymbol{a},\\\boldsymbol{v},\infty))$.


We now consider the $e$-Galois hulls of the GRS cods and extended GRS codes further. To this end, for $1\leq i\leq n$, let
\begin{align}\label{equation_ui}
    u_i=\prod_{1\leq j\leq n,i\neq j}(a_i-a_j)^{-1},
\end{align}
which will be need in the sequel. In addition, we need the following important results.


\begin{lemma}\label{lem2.1}(\cite[]{RefJ29}, Propositions \Rmnum{2}.1 and \Rmnum{2}.2)
Let notations be the same as before.
\begin{enumerate}
\item[\rm 1)] Let $\boldsymbol{c}=(v_1f(a_1),v_2f(a_2),\dots,v_nf(a_n))\in \GRS_k(\boldsymbol{a},\boldsymbol{v})$, then $\boldsymbol{c}\in \GRS_k(\boldsymbol{a},\boldsymbol{v})^{\bot_e}$ if and only if there exists a polynomial $g(x)\in \mathbb{F}_q[x]$ with $\deg(g(x))\leq n-k-1$ such that
\begin{align*}
  (v_1^{p^e+1}f^{p^e}(a_1),v_2^{p^e+1}f^{p^e}(a_2),\dots,v_n^{p^e+1}f^{p^e}(a_n))=(u_1g(a_1),u_2g(a_2),\dots,u_ng(a_n)).
\end{align*} 
\item [\rm 2)]	Let $\boldsymbol{c}=(v_1f(a_1),v_2f(a_2),\dots,v_nf(a_n),f_{k-1})\in \GRS_k(\boldsymbol{a},\boldsymbol{v},\infty)$, then $\boldsymbol{c}\in \GRS_k(\boldsymbol{a},\boldsymbol{v},\infty)^{\bot_e}$ if and only if there exists a polynomial $g(x)\in \mathbb{F}_q[x]$ with $\deg(g(x))\leq n-k$ such that
\begin{align*}
  (v_1^{p^e+1}f^{p^e}(a_1),v_2^{p^e+1}f^{p^e}(a_2),\dots,v_n^{p^e+1}f^{p^e}(a_n),f_{k-1}^{p^e})=(u_1g(a_1),u_2g(a_2),\dots,u_ng(a_n),-g_{n-k}),
\end{align*}
where $g_{n-k}$ is the coefficient of $x^{n-k}$ in $g(x)$.
\end{enumerate}
\end{lemma}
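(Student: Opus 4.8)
The plan is to reduce the membership condition $\boldsymbol c\in\GRS_k(\boldsymbol a,\boldsymbol v)^{\bot_e}$ (resp. for the extended code) to a statement about the \emph{Euclidean} dual of a plain (extended) GRS code, and then invoke the classical description of that dual. Both parts proceed in the same way; part 2) only adds bookkeeping for the $\infty$-coordinate. Throughout I will lean on the elementary Lagrange/partial-fraction identity
\[
\sum_{i=1}^{n} u_i a_i^{\,s}=\begin{cases} 0, & 0\le s\le n-2,\\ 1, & s=n-1, \end{cases}
\]
with $u_i$ as in \eqref{equation_ui}, which yields $\GRS_k(\boldsymbol a,\boldsymbol 1)^{\bot_0}=\GRS_{n-k}(\boldsymbol a,\boldsymbol u)$ together with the analogous sign-twisted statement for extended GRS codes.

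For part 1), I would fix the monomial basis $\boldsymbol b_t=(v_1a_1^{\,t},\dots,v_na_n^{\,t})$, $t=0,\dots,k-1$, of $\GRS_k(\boldsymbol a,\boldsymbol v)$, so that $\boldsymbol c=(v_1f(a_1),\dots,v_nf(a_n))$ lies in the $e$-Galois dual exactly when the $k$ scalar equations obtained by pairing $\boldsymbol c$ with $\boldsymbol b_0,\dots,\boldsymbol b_{k-1}$ in the $e$-Galois inner product all vanish. Writing these out and simplifying, they become $\sum_{i=1}^{n}v_i^{\,p^e+1}f^{\,p^e}(a_i)\,a_i^{\,t}=0$ for $t=0,\dots,k-1$, which says precisely that $\boldsymbol w:=(v_1^{\,p^e+1}f^{\,p^e}(a_1),\dots,v_n^{\,p^e+1}f^{\,p^e}(a_n))$ is Euclidean-orthogonal to $\GRS_k(\boldsymbol a,\boldsymbol 1)$, i.e. $\boldsymbol w\in\GRS_{n-k}(\boldsymbol a,\boldsymbol u)$. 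By the definition of a GRS code this is equivalent to the existence of $g\in\F_q[x]$ with $\deg g\le n-k-1$ and $w_i=u_ig(a_i)$ for all $i$, which is the asserted identity.

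For part 2), I would run the same computation with the basis $\boldsymbol b_t=(v_1a_1^{\,t},\dots,v_na_n^{\,t},\delta_{t,k-1})$ of $\GRS_k(\boldsymbol a,\boldsymbol v,\infty)$, where $\delta$ is the Kronecker symbol recording the coefficient of $x^{k-1}$. Pairing $\boldsymbol c=(v_1f(a_1),\dots,v_nf(a_n),f_{k-1})$ against the $\boldsymbol b_t$ now gives $\sum_{i=1}^{n}v_i^{\,p^e+1}f^{\,p^e}(a_i)\,a_i^{\,t}+\delta_{t,k-1}f_{k-1}^{\,p^e}=0$ for $t=0,\dots,k-1$, i.e. the length-$(n+1)$ vector $\boldsymbol w':=(v_1^{\,p^e+1}f^{\,p^e}(a_1),\dots,v_n^{\,p^e+1}f^{\,p^e}(a_n),f_{k-1}^{\,p^e})$ is Euclidean-orthogonal to $\GRS_k(\boldsymbol a,\boldsymbol 1,\infty)$. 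Using that the Euclidean dual of an extended GRS code is again an extended GRS code with column multipliers $u_i$ but with the $\infty$-coordinate negated --- which is exactly where the case $s=n-1$ of the identity above enters, to cancel the degree-$(n-1)$ contribution $f_{k-1}g_{n-k}$ of the first $n$ coordinates --- I would conclude $\boldsymbol w'\in\{(u_1g(a_1),\dots,u_ng(a_n),-g_{n-k}):g\in\F_q[x],\ \deg g\le n-k\}$, which is the claimed pair of identities.

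The step I expect to be most delicate is the Frobenius bookkeeping inside the $e$-Galois pairing of $\boldsymbol c$ with $\boldsymbol b_t$: one must check that it really collapses to the clean linear system above, so that $\boldsymbol w$ (resp. $\boldsymbol w'$) lands in an untwisted GRS (resp. extended GRS) code rather than one with Frobenius-shifted code locators or column multipliers, and, in part 2), that the $\infty$-coordinate acquires exactly the sign $-1$. Once that is pinned down, the remainder is just the classical Euclidean duality of (extended) GRS codes.
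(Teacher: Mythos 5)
You should first note that the paper does not actually prove this lemma: it is quoted verbatim from \cite{RefJ29} (Propositions II.1 and II.2), so there is no in-paper argument to compare against. Your overall plan is the standard one for results of this type: reduce to Euclidean duality of (extended) GRS codes via the identity $\sum_{i=1}^{n}u_ia_i^{s}=0$ for $0\le s\le n-2$ and $=1$ for $s=n-1$, use $\GRS_k(\boldsymbol{a},\boldsymbol{1})^{\bot_0}=\GRS_{n-k}(\boldsymbol{a},\boldsymbol{u})$, and for the extended code the dual description $\{(u_1g(a_1),\dots,u_ng(a_n),-g_{n-k}):\deg(g)\le n-k\}$; those ingredients, the degree counts, and the sign analysis at the $\infty$-coordinate are all correct.

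The genuine gap is exactly the step you flagged as delicate and then deferred. Under the paper's stated convention, $(\bc_1,\bc_2)_e=\sum_i c_{1i}c_{2i}^{p^e}$ and $\C^{\bot_e}=\{\bc_1:(\bc_1,\bc_2)_e=0\ \forall\,\bc_2\in\C\}$, so membership of $\bc$ in $\GRS_k(\boldsymbol{a},\boldsymbol{v})^{\bot_e}$ is the vanishing of $(\bc,\boldsymbol{b}_t)_e=\sum_i v_i^{p^e+1}f(a_i)a_i^{tp^e}$, whereas the system you wrote down, $\sum_i v_i^{p^e+1}f^{p^e}(a_i)a_i^{t}=0$, is the vanishing of $(\boldsymbol{b}_t,\bc)_e$, i.e.\ the pairing with the Frobenius placed on the candidate dual vector. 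These are not the same condition: the two notions of $e$-Galois dual differ by a $p^{2e}$-Frobenius twist of the underlying Euclidean dual and do not coincide in general. Under the paper's literal definition the correct collapse is that $(v_i^{p^e+1}f(a_i))_i$ is Euclidean-orthogonal to the GRS code with \emph{Frobenius-shifted} locators $a_i^{p^e}$, and after applying the inverse Frobenius one obtains the characterization $v_i^{p^{h-e}+1}f^{p^{h-e}}(a_i)=u_ig(a_i)$, i.e.\ exponent $p^{h-e}$ rather than $p^e$ (and likewise $f_{k-1}^{p^{h-e}}=-g_{n-k}$ in part 2)). Your computation does prove the lemma exactly as stated, but only under the transposed convention in which one requires $(\bc_2,\bc)_e=0$ for all $\bc_2\in\C$ — which is the orientation effectively used in \cite{RefJ29} and in the paper's later applications of the lemma. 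To close the gap you must either adopt and justify that orientation explicitly, or carry out the Frobenius untwisting and state the result with $e$ replaced by $h-e$; as written, the ``clean linear system'' at the heart of your argument is not what the paper's definition of $\bot_e$ yields.
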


\begin{lemma}\label{lem.gcd}(\cite{RefJ37}, Lemma 3)
  Let $s\geq 1$ and $p>1$ be two integers. Then
  \begin{equation}\label{equ.gcd}
  \gcd(p^r+1, p^s-1)=\left\{
  \begin{array}{rcl}
  1 & & {if\ \frac{s}{\gcd(r,s)}\ is\ odd\ and\ p\ is\ even,}\\
  2 & & {if\ \frac{s}{\gcd(r,s)}\ is\ odd\ and\ p\ is\ odd,}\\
  p^{\gcd(r,s)}+1 & & {if\ \frac{s}{\gcd(r,s)}\ is\ even.}\\
  \end{array} \right.
  \end{equation}
\end{lemma}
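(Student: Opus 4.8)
The final statement to prove is Lemma \ref{lem.gcd}, the classical formula for $\gcd(p^r+1,p^s-1)$. Here is how I would approach it.

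\medskip

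\noindent\textbf{Proof proposal.} The plan is to reduce everything to the identity $\gcd(p^r+1,p^s-1)=\gcd(p^{2r}-1,p^s-1)/\gcd(p^r-1,p^s-1)$, valid because $p^r+1$ and $p^r-1$ share only a factor of $\gcd(2,p-1)$, together with the well-known fact that $\gcd(p^a-1,p^b-1)=p^{\gcd(a,b)}-1$ for any positive integers $a,b$. First I would set $d=\gcd(r,s)$ and write $\gcd(2r,s)$ in terms of $d$: if $s/d$ is even then $\gcd(2r,s)=2d$, and if $s/d$ is odd then $\gcd(2r,s)=d$. Substituting into the displayed identity gives, in the even case, $\gcd(p^r+1,p^s-1)=(p^{2d}-1)/(p^d-1)=p^d+1$, which is exactly the third branch. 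In the odd case the identity yields $\gcd(p^r+1,p^s-1)=(p^d-1)/(p^d-1)=1$ when $p^r-1$ and $p^r+1$ are coprime, i.e. when $p$ is even, giving the first branch; and when $p$ is odd one must instead track the extra factor of $2$, which gives the value $2$ of the second branch.

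\medskip

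\noindent The key steps, in order: (i) prove $\gcd(p^a-1,p^b-1)=p^{\gcd(a,b)}-1$ (standard, via the Euclidean algorithm on exponents: $p^a-1\equiv p^{a-b}-1 \pmod{p^b-1}$ when $a\ge b$); (ii) establish $\gcd(p^r-1,p^r+1)=\gcd(2,p^r-1)=\gcd(2,p-1)$, i.e. $1$ if $p=2$ and $2$ if $p$ is odd; (iii) prove the bridging identity relating $\gcd(p^r+1,p^s-1)$ to the two gcd's with $p^{2r}-1$ and $p^r-1$ — concretely, write $m=\gcd(p^r+1,p^s-1)$ and $\ell=\gcd(p^{2r}-1,p^s-1)=\gcd(p^r-1,p^s-1)\cdot\gcd(p^r+1,p^s-1)/\gcd(p^r-1,p^r+1,p^s-1)$, handling the correction factor from step (ii); (iv) compute $\gcd(2r,s)$ and $\gcd(r,s)$ in the two parity cases; (v) assemble. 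Alternatively, for step (iii) one can argue more directly: any prime $\ell\ne 2$ dividing $p^r+1$ has $p$ of even multiplicative order $2k$ modulo $\ell$ with $k\mid r$, $k\nmid r$... actually with $\mathrm{ord}_\ell(p)=2k$ and $k \mid r$; then $\ell\mid p^s-1$ forces $2k\mid s$, and combining $k\mid r$, $2k\mid s$, $2k\nmid 2r$ in the odd case with $2k\mid\gcd(2r,s)$ pins down the structure. I would go with whichever is shorter in writeup; the exponent-Euclidean-algorithm route is cleanest.

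\medskip

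\noindent The main obstacle is the bookkeeping around the factor of $2$ in step (ii)--(iii): one must be careful that when $p$ is odd and $s/\gcd(r,s)$ is odd, the factor $2$ genuinely survives (it does, since $2\mid p^r+1$ and $2\mid p^s-1$) but nothing larger does. The cleanest way to nail this is to work prime-by-prime: fix an odd prime $\ell$ with $\ell^a\,\|\,p^r+1$; show $\ell$ divides $p^s-1$ only if $s/\gcd(r,s)$ is even, by comparing the multiplicative order of $p$ modulo $\ell$ (which divides $2r$ but not $r$) against $s$; and for $\ell=2$ use the elementary parity count $v_2(p^r+1)$ versus $v_2(p^s-1)$ when $p$ is odd. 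Summing the resulting valuations over all primes gives the stated three-way formula. Since this is cited as \cite{RefJ37}, Lemma 3, in the paper, I expect the authors simply to invoke it rather than reprove it, but the above is the self-contained argument I would write if needed.
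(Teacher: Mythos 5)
The paper itself offers no proof of this lemma: it is quoted verbatim as Lemma 3 of \cite{RefJ37} and simply invoked, so there is no in-paper argument to compare against; your closing expectation on that point is right. Your self-contained argument is correct in substance and is the standard one. The only place needing care is the one you already flag: the bridging identity $\gcd(p^r+1,p^s-1)=\gcd(p^{2r}-1,p^s-1)/\gcd(p^r-1,p^s-1)$ is literally valid only when $p$ is even; for odd $p$ the factor of $2$ can make it fail as stated (the product formula $\gcd(ab,m)=\gcd(a,m)\gcd(b,m)$ needs $\gcd(a,b)=1$), so the clean route is exactly your prime-by-prime fallback. That fallback does close every case: for an odd prime $\ell$ dividing both $p^r+1$ and $p^s-1$, the order $t$ of $p$ modulo $\ell$ satisfies $t\mid 2r$, $t\nmid r$, $t\mid s$, forcing $v_2(s)>v_2(r)$ and hence $s/\gcd(r,s)$ even, which kills all odd primes in the first two branches; the value $2$ versus $1$ in the odd branch is the parity count $\min(v_2(p^r+1),v_2(p^s-1))=1$ (note $r/d$ and $s/d$ cannot both produce $v_2\geq 2$ there); and in the even branch, writing $d=\gcd(r,s)$, one has $r/d$ odd, so $p^d+1\mid p^r+1$ and $p^d+1\mid p^{2d}-1\mid p^s-1$, while $\gcd(p^r+1,p^s-1)\mid p^{\gcd(2r,s)}-1=p^{2d}-1$ and its common part with $p^d-1$ divides $2$, pinning the answer to $p^d+1$ after the same $2$-adic check. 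So the proposal is sound; just make sure the write-up derives the even-branch equality via these divisibility/valuation comparisons rather than by citing the bridging identity verbatim for odd $p$.
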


\section{Constructions} \label{sec3}

In this section, we construct several classes of MDS codes with Galois hulls of arbitrary dimensions via (extended) GRS codes. We also give two general methods to construct MDS codes with Galois hulls 
of arbitrary dimensions from Hermitian or general Galois self-orthogonal (extended) GRS codes. Some MDS codes with Galois hulls of arbitrary dimensions constructed from Hermitian self-orthogonal (extended) 
GRS codes have larger dimensions.

\subsection{MDS codes with Galois hulls of arbitrary dimensions from Galois self-orthogonal (extended) GRS codes}\label{Construction A}


In this subsection, we use Galois self-orthogonal (extended) GRS codes to construct MDS codes with Galois hulls of arbitrary dimensions. We begin with the following lemmas, which give some necessary conditions 
for Galois self-orthogonal (extended) GRS codes.

\begin{lemma}\label{coro.Galois self-orthogonal GRS}
  If $\GRS_k(\boldsymbol{a},\boldsymbol{v})\subseteq \GRS_k(\boldsymbol{a},\boldsymbol{v})^{\bot_e}$, then there exists a monic polynomial $h(x)\in \mathbb{F}_q[x]$ with $\deg(h(x))\leq n-k-1$ such that 
  \begin{equation}\label{eq.coro.Galois self-orthogonal GRS}
      \lambda u_ih(a_i)=v_i^{p^e+1},\ 1\leq i\leq n,
  \end{equation}
  where $\lambda \in \mathbb{F}_q^*$.
\end{lemma}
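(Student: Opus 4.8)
The plan is to apply part 1) of Lemma \ref{lem2.1} with the special choice $f(x)=1$, which forces the self-orthogonality hypothesis to hand us a polynomial identity at the $n$ code locators. First I would observe that the constant polynomial $f(x)=1$ has degree $0\leq k-1$ (using $k\geq 1$, which is implicit since $\GRS_k$ is defined for $k\geq 0$ and the self-orthogonality statement is vacuous or trivial for $k=0$), so the codeword $\boldsymbol{c}=(v_1,v_2,\dots,v_n)$ lies in $\GRS_k(\boldsymbol{a},\boldsymbol{v})$. Since $\GRS_k(\boldsymbol{a},\boldsymbol{v})\subseteq \GRS_k(\boldsymbol{a},\boldsymbol{v})^{\bot_e}$, this $\boldsymbol{c}$ also lies in the dual, so Lemma \ref{lem2.1} 1) yields a polynomial $g(x)\in\F_q[x]$ with $\deg(g(x))\leq n-k-1$ such that $v_i^{p^e+1}=u_ig(a_i)$ for all $1\leq i\leq n$, because $f^{p^e}(a_i)=1$.

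Next I would deal with the requirement that $h(x)$ be \emph{monic} and the appearance of the scalar $\lambda\in\F_q^*$. The polynomial $g$ obtained above need not be monic; moreover it cannot be the zero polynomial, since each $v_i\neq 0$ and each $u_i\neq 0$ (the $u_i$ are products of inverses of nonzero differences of distinct $a_j$'s), so $g(a_i)=v_i^{p^e+1}u_i^{-1}\neq 0$. Hence $g$ has a well-defined nonzero leading coefficient; call it $\lambda^{-1}$, i.e. write $g(x)=\lambda^{-1}h(x)$ where $h(x)$ is monic of the same degree $\deg(h(x))=\deg(g(x))\leq n-k-1$. Substituting back gives $v_i^{p^e+1}=u_i\lambda^{-1}h(a_i)$, i.e. $\lambda u_i h(a_i)=v_i^{p^e+1}$ for all $i$, which is exactly \eqref{eq.coro.Galois self-orthogonal GRS}.

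I do not expect a genuine obstacle here; the only point requiring a little care is the nonvanishing of $g$ (so that dividing by its leading coefficient is legitimate and the degree is preserved), which follows from $v_i\neq 0$ and $u_i\neq 0$. A secondary bookkeeping point is confirming $\deg(h(x))\leq n-k-1$ is consistent — it is, since passing from $g$ to $h$ only rescales and does not change the degree. One might also remark that $n-k-1\geq 0$ is needed for such an $h$ to exist, which holds because $\GRS_k$ being a proper code inside its dual forces $k\leq n/2 < n$; I would state this briefly if the paper's conventions require it, but otherwise the argument above is complete.
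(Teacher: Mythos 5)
Your proposal is correct and follows essentially the same route as the paper: take $f(x)=1$, apply part 1) of Lemma \ref{lem2.1} to get $g(x)$ with $\deg(g(x))\leq n-k-1$ satisfying $v_i^{p^e+1}=u_ig(a_i)$, and then rescale $g$ by its (nonzero) leading coefficient to obtain the monic $h(x)$. The only difference is a harmless labeling of the scalar (your derivation actually gives $\lambda^{-1}u_ih(a_i)=v_i^{p^e+1}$ with your naming, but since $\lambda$ is existentially quantified in $\mathbb{F}_q^*$ this is immaterial), and your explicit observation that $g$ is nonzero because $u_i,v_i\neq 0$ is a nice touch the paper leaves implicit.
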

\begin{proof}
Suppose that $\GRS_k(\boldsymbol{a},\boldsymbol{v})\subseteq \GRS_k(\boldsymbol{a},\boldsymbol{v})^{\bot_e}$. Then, for any codeword 
$$\boldsymbol{c}=(v_1f(a_1),v_2f(a_2),\dots,v_nf(a_n)) \in  \GRS_k(\boldsymbol{a},\boldsymbol{v}),$$
we have $\boldsymbol{c}\in \GRS_k(\boldsymbol{a},\boldsymbol{v})^{\bot_e}$. Specially, taking $f(x)=1$, then $\boldsymbol{c}=(v_1,v_2,\dots,v_n)$. According to the result 1) of Lemma \ref{lem2.1}, there exists a polynomial $g(x)\in \mathbb{F}_q[x]$ with $\deg(g(x))\leq n-k-1$ such that $v_i^{p^e+1}f^{p^e}(a_i)=u_ig(a_i)$, i.e., $v_i^{p^e+1}=u_ig(a_i)$ for $1\leq i\leq n$. It is clear that there exists a $\lambda^{-1}\in \mathbb{F}_q^*$ such that $h(x)=\lambda^{-1}g(x)\in \mathbb{F}_q[x]$ is a monic polynomial with $\deg(h(x))=\deg(g(x))\leq n-k-1$. Note that $u_i g(a_i)=\lambda u_i h(a_i)$, hence $\lambda u_ih(a_i)=v_i^{p^e+1}$ for $1\leq i\leq n$. This completes the proof. 
\end{proof}

\begin{lemma}\label{coro.Galois self-orthogonal EGRS}
  If $\GRS_k(\boldsymbol{a},\boldsymbol{v},\infty)\subseteq 
  \GRS_k(\boldsymbol{a},\boldsymbol{v},\infty)^{\bot_e}$ with $k\geq 2$, then there exists a monic polynomial $h(x)\in \mathbb{F}_q[x]$ with $\deg(h(x))\leq n-k-1$ such that 
  \begin{equation}\label{eq.coro.Galois self-orthogonal EGRS}
      \lambda u_ih(a_i)=v_i^{p^e+1},\ 1\leq i\leq n,
  \end{equation}
  where $\lambda \in \mathbb{F}_q^*$.
\end{lemma}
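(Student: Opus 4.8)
The plan is to mirror the proof of Lemma \ref{coro.Galois self-orthogonal GRS}, but to handle the extra infinity coordinate, which forces the use of a polynomial $f(x)$ of degree exactly $k-1$ rather than the constant $f(x)=1$. First I would assume $\GRS_k(\boldsymbol{a},\boldsymbol{v},\infty)\subseteq \GRS_k(\boldsymbol{a},\boldsymbol{v},\infty)^{\bot_e}$ and pick a convenient codeword. Since $k\geq 2$, the polynomial $f(x)=x^{k-1}$ is admissible and has $x^{k-1}$-coefficient $f_{k-1}=1$, so the codeword is $\boldsymbol{c}=(v_1 a_1^{k-1},\dots,v_n a_n^{k-1},1)$. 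By hypothesis $\boldsymbol{c}\in\GRS_k(\boldsymbol{a},\boldsymbol{v},\infty)^{\bot_e}$, so applying part 2) of Lemma \ref{lem2.1} there is a polynomial $g(x)\in\F_q[x]$ with $\deg(g(x))\leq n-k$ such that $v_i^{p^e+1}(a_i^{k-1})^{p^e}=u_i g(a_i)$ for $1\leq i\leq n$, together with the end condition $1 = f_{k-1}^{p^e} = -g_{n-k}$.

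The end condition $g_{n-k}=-1$ tells us that $g(x)$ has degree exactly $n-k$ with leading coefficient $-1$; in particular $g$ is nonzero. Next I would divide the $n$ coordinate equations by the nonzero scalars: write $v_i^{p^e+1} = u_i\, g(a_i)\, a_i^{-(k-1)p^e}$ — but this introduces a denominator $a_i^{(k-1)p^e}$, which is problematic if some $a_i=0$. To avoid fractions I would instead argue directly that the function $i\mapsto g(a_i)$ is, up to a common nonzero scalar, a polynomial evaluation; the cleanest route is to set $h(x) := -g(x)$ so that $h$ is monic of degree $n-k$, and then observe that the desired relation $\lambda u_i h(a_i) = v_i^{p^e+1}$ cannot hold with this $h$ unless we absorb the $a_i^{(k-1)p^e}$ factor. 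So instead I would choose the codeword coming from a polynomial with no constant term issues: use $f(x)=x^{k-1}$ only to pin down the degree of $g$, and separately use $f(x)=1$ — wait, $f(x)=1$ is also admissible with $f_{k-1}=0$ — to get the clean relation. Taking $f(x)=1$ in Lemma \ref{lem2.1} part 2) gives a polynomial $g_0(x)$ with $\deg(g_0)\leq n-k$ and $v_i^{p^e+1} = u_i g_0(a_i)$ for all $i$, plus $0 = f_{k-1}^{p^e} = -(g_0)_{n-k}$, so $\deg(g_0)\leq n-k-1$. Normalizing $g_0$ by its leading coefficient produces the monic $h(x)$ of degree $\leq n-k-1$ with $\lambda u_i h(a_i)=v_i^{p^e+1}$, exactly as claimed; if $g_0\equiv 0$ then all $v_i=0$, contradicting $\boldsymbol{v}\in(\F_q^*)^n$, so $h$ is well defined.

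Thus the role of the hypothesis $k\geq 2$ is subtler than in Lemma \ref{coro.Galois self-orthogonal GRS}: actually $f(x)=1$ alone already suffices and only needs $k\geq 1$, so I would double-check whether $k\geq 2$ is genuinely needed — I suspect the authors include it because for $k=1$ the extended code $\GRS_1(\boldsymbol{a},\boldsymbol{v},\infty)$ has the last column equal to the all-constant vector and the degenerate coordinate $f_{k-1}=f_0$ behaves differently, or because a later application needs $\deg(h)\leq n-k-1$ with $n-k-1\geq 0$ meaningfully. In writing the proof I would present the $f(x)=1$ computation, extract $\deg(g_0)\leq n-k-1$ from the end condition, normalize to get the monic $h$, and note that $\boldsymbol{v}\in(\F_q^*)^n$ rules out $h\equiv 0$. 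The main obstacle is purely bookkeeping around the infinity coordinate — making sure the $(n{+}1)$-st equation $f_{k-1}^{p^e}=-g_{n-k}$ is used to drop the degree of $g$ by one — and there is no serious analytic difficulty; the whole argument is a two-line specialization of Lemma \ref{lem2.1}.
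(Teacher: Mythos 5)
Your final argument is exactly the paper's proof: take $f(x)=1$, apply part 2) of Lemma \ref{lem2.1} to get $g$ with $\deg(g)\leq n-k$ and $g_{n-k}=-f_{k-1}^{p^e}=0$, hence $\deg(g)\leq n-k-1$, then normalize by the leading coefficient to obtain the monic $h(x)$ with $\lambda u_i h(a_i)=v_i^{p^e+1}$ (the detour through $f(x)=x^{k-1}$ is unnecessary, and your observation that $g\not\equiv 0$ because $\boldsymbol{v}\in(\F_q^*)^n$ is a small point the paper leaves implicit).

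One correction: your closing remark that ``$f(x)=1$ alone already suffices and only needs $k\geq 1$'' is wrong, and the hypothesis $k\geq 2$ is precisely what your own computation uses. The end condition reads $f_{k-1}^{p^e}=-g_{n-k}$, where $f_{k-1}$ is the coefficient of $x^{k-1}$ in $f(x)=1$; this coefficient is $0$ exactly when $k\geq 2$. If $k=1$, then $f_{k-1}=f_0=1$, the codeword is $(v_1,\dots,v_n,1)$, the end condition forces $g_{n-1}=-1\neq 0$, and the degree of $g$ stays at $n-k$ rather than dropping to $n-k-1$, so this choice of $f$ no longer yields the claimed bound $\deg(h(x))\leq n-k-1$.
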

\begin{proof}
  Similar to the proof of Lemma \ref{coro.Galois self-orthogonal GRS}, take $f(x)=1$, then 
  $$\boldsymbol{c}=(v_1,v_2,\dots,v_n,0)\in 
  \GRS_k(\boldsymbol{a},\boldsymbol{v},\infty) \subseteq \GRS_k(\boldsymbol{a},\boldsymbol{v},\infty)^{\bot_e}.$$ 
  According to the result 2) of Lemma \ref{lem2.1}, there exists a polynomial $g(x)\in \mathbb{F}_q[x]$ with $\deg(g(x))\leq n-k$  such that $v_i^{p^e+1}=u_ig(a_i)$ for $1\leq i\leq n$ and $g_{n-k}=-f_{k-1}^{p^e}=0$. It follows that 
  $\deg(g(x))\leq n-k-1$. Clearly, there exists a $\lambda^{-1}\in \mathbb{F}_q^*$ such that $h(x)=\lambda^{-1}g(x)\in \mathbb{F}_q[x]$ is a monic polynomial with $\deg(h(x))=\deg(g(x))\leq n-k-1$. Therefore, $v_i^{p^e+1}=u_ig(a_i)=\lambda u_ih(a_i)$ for $1\leq i\leq n$. This completes the proof. 
\end{proof}

\begin{remark}\label{rem.1}
    
    \begin{enumerate}
    \item [\rm 1)]  By the proofs of Lemmas \ref{coro.Galois self-orthogonal GRS} and \ref{coro.Galois self-orthogonal EGRS}, $h(x)=\lambda^{-1}g(x)$ and $g(a_i)=u_i^{-1}v_i^{p^e+1}$, $1\leq i\leq n$ 
    when $f(x)=1$. Note that $g(x)$ is a polynomial with $\deg(g(x))\leq n-k-1$ and $a_i$, $u_i$ and $v_i$ are all known for $1\leq i\leq n$ when a GRS code or extended GRS code is given. Hence, 
    according to the existence and uniqueness of Lagrange Interpolation Formula, the interpolation polynomial obtained by using these $n$ mutually different interpolation points 
    $(a_1,u_1^{-1}v_1^{p^e+1})$, $(a_2,u_2^{-1}v_2^{p^e+1})$, $\cdots$, $(a_n,u_n^{-1}v_n^{p^e+1})$ is actually $g(x)$ itself. Therefore the structure of $h(x)$ and $\deg(h(x))$ can be 
    easily derived from $h(x)=\lambda^{-1}g(x)$.

    \item [\rm 2)]  In practical applications, there are many occasions where we don't even need to use Lagrange Interpolation Formula. Because in the constructions of many known self-orthogonal 
    (extended) GRS codes, $g(x)$, as an important intermediate tool, was usually given directly  (e.g., see \cite[]{RefJ29,RefJ30,RefJ24,RefJ7,RefJ33,RefJ19,RefJ6,RefJ110,RefJ111} and references therein). 
    Hence taking $f(x)=1$, in these cases, the structure of $h(x)$ and $\deg(h(x))$ can be determined directly. Some specific examples will be given later.
    \end{enumerate}
    \end{remark}
  
We now can start our constructions. It is worth noting that the polynomial $h(x)$ appearing in the constructions refers to $h(x)$ in Lemmas \ref{coro.Galois self-orthogonal GRS} and 
\ref{coro.Galois self-orthogonal EGRS}, i.e., Eqs. (\ref{eq.coro.Galois self-orthogonal GRS}) and (\ref{eq.coro.Galois self-orthogonal EGRS}) hold. According to Remark \ref{rem.1}, 
$\deg(h(x))$ can be easily determined. 

\begin{theorem}\label{th.ConA}
Let $q=p^h\geq5$ be a prime power. Let $1\leq e, e'\leq h-1$ such that $e=\gcd(e',h)$ and $\frac{h}{e}$ is even. Then the following hold.
\begin{enumerate}
\item [\rm 1)] Suppose that $\GRS_m(\boldsymbol{a},\boldsymbol{v})\subseteq \GRS_m(\boldsymbol{a},\boldsymbol{v})^{\bot_e}$, then for $1\leq k\leq \lfloor \frac{p^{e'}+n-1-\deg(h(x))}{p^{e'}+1} \rfloor$, there exists an $[n, k]_{q}$ MDS code $\C$ with $\dim(\Hull_{e'}(\C))=l$, where $0\leq l\leq k$.
\item [\rm 2)] Suppose that $\GRS_m(\boldsymbol{a},\boldsymbol{v},\infty)\subseteq \GRS_m(\boldsymbol{a},\boldsymbol{v},\infty)^{\bot_e}$ with $m\geq 2$, then for $1\leq k\leq \lfloor \frac{p^{e'}+n-1-\deg(h(x))}{p^{e'}+1} \rfloor$, there exists an $[n+1,k]_{q}$ MDS code $\C$ with $\dim(\Hull_{e'}(\C))=l$, where $0\leq l\leq k-1$.
\end{enumerate}
\end{theorem}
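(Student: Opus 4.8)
The plan is to start from the hypothesis that $\GRS_m(\boldsymbol{a},\boldsymbol{v})$ is $e$-Galois self-orthogonal and produce, via a suitable change of column multipliers, a new GRS code of dimension $k$ whose $e'$-Galois hull has any prescribed dimension $l$ with $0\le l\le k$. The first step is to invoke Lemma \ref{coro.Galois self-orthogonal GRS} to obtain the monic polynomial $h(x)$ of degree at most $n-m-1$ with $\lambda u_i h(a_i)=v_i^{p^e+1}$ for $1\le i\le n$. The key idea will then be to build new column multipliers $\boldsymbol{v}'$ from $\boldsymbol{v}$ so that $\GRS_k(\boldsymbol{a},\boldsymbol{v}')$ has the desired $e'$-Galois hull; the natural candidate is to set $v_i' = v_i^{\,t}\cdot w_i$ for appropriate exponent $t$ relating $p^e+1$ to $p^{e'}+1$, using the fact that $e=\gcd(e',h)$ together with Lemma \ref{lem.gcd}. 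Since $\frac{h}{e}$ is even, Lemma \ref{lem.gcd} gives $\gcd(p^{e'}+1,p^h-1)=p^{\gcd(e',h)}+1=p^e+1$, which is exactly what lets one ``divide'' the exponent $p^e+1$ coming from the $e$-self-orthogonality into the exponent $p^{e'}+1$ governing the $e'$-dual.

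Next I would set up the hull-dimension bookkeeping. The standard technique (as in the references cited for the Euclidean and Hermitian cases) is: take the first $l$ of the new column multipliers so that those coordinates satisfy the self-orthogonality relation of Lemma \ref{lem2.1} 1) for the $e'$-inner product — forcing $f$ of degree $\le l-1$ to lie in both $\C$ and $\C^{\bot_{e'}}$ — while the remaining $n-l$ column multipliers are perturbed (by multiplying by suitable scalars $\theta_j$) to break orthogonality in those coordinates. Concretely, one wants to choose $\boldsymbol{v}'$ and a target polynomial so that $v_i'^{\,p^{e'}+1} = \eta\, u_i\, h_1(a_i)$ holds for $1\le i\le l$ for some polynomial $h_1$, and fails in a controlled way for the rest; then by Lemma \ref{lem2.1} 1) a codeword $(v_i' f(a_i))$ lies in the $e'$-hull iff $f$ has degree $\le l-1$, giving $\dim(\Hull_{e'}(\C))=l$. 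The constraint $k\le \lfloor\frac{p^{e'}+n-1-\deg h(x)}{p^{e'}+1}\rfloor$ is precisely the condition guaranteeing that the required ``dual'' polynomial $g$ associated with these modified multipliers has degree $\le n-k-1$, so that Lemma \ref{lem2.1} applies; I would verify this degree count carefully, tracking how $\deg h(x)$ enters when the exponent $p^e+1$ is amplified to $p^{e'}+1$ (roughly, $h(x)$ gets replaced by a power of $h(x)$ of degree $\frac{p^{e'}+1}{p^e+1}\deg h(x)$, or one exploits $h(x)^{(p^{e'}+1)/(p^e+1)}$ being a genuine polynomial). That the new code stays MDS is automatic since every $\GRS_k$ is MDS.

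For part 2), the argument is parallel but uses Lemma \ref{coro.Galois self-orthogonal EGRS} and Lemma \ref{lem2.1} 2) for the extended code: the extra infinity coordinate forces $g_{n-k}=0$ in the self-orthogonal case, hence $\deg h(x)\le n-m-1$ again, and one must also control the last coordinate $f_{k-1}^{p^{e'}}$ versus $-g_{n-k}$ in the modified code. Because the infinity coordinate cannot be rescaled freely in the same way as the finite ones, the hull dimension can only be pushed up to $k-1$ rather than $k$ — this is the reason for the range $0\le l\le k-1$ in the statement, and the requirement $m\ge 2$ ensures the interpolation/degree arguments for the extended code are non-degenerate. I would handle this by choosing the first $l$ modified multipliers to satisfy the analogue of Eq. \eqref{eq.coro.Galois self-orthogonal EGRS} for $e'$, leaving the infinity slot alone, and then checking that a codeword lies in the $e'$-hull exactly when $\deg f\le l-1$.

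The main obstacle I anticipate is the exponent manipulation: converting the relation $\lambda u_i h(a_i)=v_i^{p^e+1}$ into a usable relation with $p^{e'}+1$ in place of $p^e+1$, and simultaneously arranging that the new column multipliers $v_i'$ actually exist in $\F_q^*$ (i.e. that the relevant elements are $(p^{e'}+1)$-th powers or can be adjusted to be so). This is exactly where $e=\gcd(e',h)$ and the ``$\frac{h}{e}$ even'' hypothesis are essential, via $\gcd(p^{e'}+1,p^h-1)=p^e+1$ from Lemma \ref{lem.gcd}: it guarantees the map $x\mapsto x^{p^{e'}+1}$ on $\F_q^*$ has image equal to the set of $(p^e+1)$-th powers, so the self-orthogonality data at level $e$ can be realized as $e'$-level data. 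Getting the degree bookkeeping (and the appearance of $\deg h(x)$ in the ceiling/floor bound) to line up with this exponent change is the delicate computational heart of the proof; everything else is assembling known GRS-hull techniques around it.
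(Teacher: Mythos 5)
Your overall skeleton is the same as the paper's: obtain $h(x)$ from Lemma \ref{coro.Galois self-orthogonal GRS}, use Lemma \ref{lem.gcd} (with $\frac{h}{e}$ even) and a Bezout identity $\mu(p^{e'}+1)+\nu(p^h-1)=p^e+1$ to rewrite $v_i^{p^e+1}=(v_i^{\mu})^{p^{e'}+1}$, modify the column multipliers, and finish with Lemma \ref{lem2.1} plus a degree count. However, your central hull-counting step is set up backwards and would fail. You keep the relation $(v_i')^{p^{e'}+1}=\lambda u_i h(a_i)$ on the first $l$ coordinates, break it on the remaining $n-l$, and claim a codeword lies in the $e'$-hull iff $\deg f\le l-1$. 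That is not what Lemma \ref{lem2.1} yields: membership in $\C^{\bot_{e'}}$ requires a single $g$ with $\deg g\le n-k-1$ matching the values $v_i'^{\,p^{e'}+1}f^{p^{e'}}(a_i)/u_i$ at \emph{all} $n$ points. With only $l$ unperturbed coordinates there are too few points to pin $g$ down, and the only transparent way to achieve membership is $g=\lambda h f^{p^{e'}}$, which forces $f$ to vanish at every one of the $n-l$ perturbed points; since $\deg f\le k-1<n-l$ in the relevant range of $k$, this typically forces $f=0$, i.e.\ a trivial hull rather than dimension $l$. The correct arrangement (the paper's) is the opposite: scale only $s=k-l$ coordinates by one $\alpha$ with $\beta=\alpha^{p^{e'}+1}\neq1$ and leave the other $n-s\ge n-k$ coordinates untouched. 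Then the large unperturbed block forces the identity $\lambda h(x)f^{p^{e'}}(x)=g(x)$ — this is exactly where $k\le\lfloor\frac{p^{e'}+n-1-\deg(h(x))}{p^{e'}+1}\rfloor$ enters, via $\deg(h)+p^{e'}(k-1)\le n-k-1$ — and the $s$ scaled coordinates then force $f(a_i)=0$ for $1\le i\le s$, so $\Hull_{e'}(\C)=\{r(x)\prod_{i=1}^{s}(x-a_i)\colon \deg r\le l-1\}$ has dimension exactly $l$ (together with the easy converse inclusion taking $g=\lambda h f^{p^{e'}}$).

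Two smaller corrections. First, no ``amplification'' of $h(x)$ occurs: after the Bezout substitution the same $h(x)$ satisfies $(v_i')^{p^{e'}+1}=\lambda u_ih(a_i)$, which is why the theorem's bound contains $\deg(h(x))$ itself and not $\frac{p^{e'}+1}{p^e+1}\deg(h(x))$; replacing $h$ by such a power is neither needed nor compatible with the stated bound. Second, in part 2) the reason $l$ only reaches $k-1$ is not that the infinity coordinate ``cannot be rescaled'': by Lemma \ref{lem2.1} 2) the last coordinate gives $f_{k-1}^{p^{e'}}=-g_{n-k}$, and the bound on $k$ rules out $\deg(h)+p^{e'}(k-1)=n-k$, so $f_{k-1}=0$ and $\deg f\le k-2$; one then scales $s=k-l-1$ coordinates, yielding $0\le l\le k-1$.
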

\begin{proof}
  1) It follows from $\GRS_m(\boldsymbol{a},\boldsymbol{v})\subseteq \GRS_m(\boldsymbol{a},\boldsymbol{v})^{\bot_e}$ and Lemma \ref{coro.Galois self-orthogonal GRS} that there exists a monic polynomial 
  $h(x)\in \mathbb{F}_q[x]$ with $\deg(h(x))\leq n-m-1$ such that 
  \begin{align}\label{eq.ConA.1}
    \lambda u_ih(a_i)=v_i^{p^e+1}\neq 0,\ 1\leq i\leq n,
  \end{align}
  where $\lambda \in \mathbb{F}_q^*$. Since $e=\gcd(e',h)$ and $\frac{h}{e}$ is even, by Lemma \ref{lem.gcd}, 
  \begin{align*}
    \gcd(p^{e'}+1,p^h-1)=p^{\gcd(e',h)}+1=p^e+1.
  \end{align*}
  Therefore, there exist two integers $\mu$ and $\nu$ such that $\mu (p^{e'}+1)+\nu (p^h-1)=p^e+1$. Denote $v_i^{\mu}=v_i'$, then
  \begin{align}\label{eq.ConA.2}
    v_i^{p^e+1}=v_i^{\mu (p^{e'}+1)+\nu (p^h-1)}=(v_i^{\mu})^{p^{e'}+1}=(v_i')^{p^{e'}+1}.
  \end{align}
  Substituting Eq. (\ref{eq.ConA.2}) into Eq. (\ref{eq.ConA.1}), we have 
\begin{equation}\notag
 \lambda u_ih(a_i)=(v_i')^{p^{e'}+1},\ 1\leq i\leq n.
\end{equation}

Since $(q-1)\nmid (p^{e'}+1)$, there is an $\alpha \in \mathbb{F}_{q}^*$ such that $\beta=\alpha^{p^{e'}+1}\neq 1$. Let $\boldsymbol{a}=(a_1,a_2,\dots,a_n)$ be the same as before and $\boldsymbol{v'}=(\alpha v_1',\dots,\alpha v_s',v_{s+1}',\dots,v_n')\in (\mathbb{F}_q^*)^n$, where $s=k-l\leq k$. We now consider the $e'$-Galois hull of the $[n, k]_{q}$ MDS code $\C=\GRS_k(\boldsymbol{a},\boldsymbol{v'})$. 

For any codeword 
$$\boldsymbol{c}=(\alpha v_1'f(a_1),\dots,\alpha v_s'f(a_s), v_{s+1}'f(a_{s+1}),\dots,v_n'f(a_n))\in \Hull_{e'}(\C),$$
 where $f(x)\in \mathbb{F}_{q}[x]$ and $\deg(f(x))\leq k-1$. By the result 1) of Lemma \ref{lem2.1}, there exists a polynomial $g(x)\in \mathbb{F}_{q}[x]$ with $\deg(g(x))\leq n-k-1$ such that
\begin{align*}
  & (\alpha^{p^{e'}+1} (v_1')^{p^{e'}+1}f^{p^{e'}}(a_1),\dots,\alpha^{p^{e'}+1} (v_s')^{p^{e'}+1}f^{p^{e'}}(a_s),(v_{s+1}')^{p^{e'}+1}f^{p^{e'}}(a_{s+1}),
  \dots, (v_n')^{p^{e'}+1}f^{p^{e'}}(a_n))\\
= & (u_1g(a_1),\dots,u_sg(a_s),u_{s+1}g(a_{s+1}),\dots,u_ng(a_n)).
\end{align*}
Replacing $\alpha^{p^{e'}+1}$ and $(v_i')^{p^{e'}+1}$ with $\beta$ and $\lambda u_ih(a_i)$, respectively, we have
\begin{equation}\label{eq.ConA.3}
\begin{split}
& (\beta \lambda u_1h(a_1)f^{p^{e'}}(a_1),\dots,\beta \lambda u_sh(a_s)f^{p^{e'}}(a_s),\lambda u_{s+1}h(a_{s+1})f^{p^{e'}}(a_{s+1}),\dots, \lambda u_nh(a_n)f^{p^{e'}}(a_n))\\
= & (u_1g(a_1),\dots,u_sg(a_s),u_{s+1}g(a_{s+1}),\dots,u_ng(a_n)).
\end{split}
\end{equation}

On one hand, from the last $n-s$ coordinates of Eq. (\ref{eq.ConA.3}), we have 
\begin{align*}
    \lambda u_ih(a_i)f^{p^{e'}}(a_i)=u_ig(a_i),\ s+1\leq i\leq n,
\end{align*}
i.e., $\lambda h(x)f^{p^{e'}}(x)=g(x)$ has at least $n-s$ distinct roots. Since $\deg(h(x))\leq n-m-1$ and $1\leq k\leq \lfloor \frac{p^{e'}+n-1-\deg(h(x))}{p^{e'}+1} \rfloor$,
\begin{align*}
& \deg(h(x)f^{p^{e'}}(x))\leq \deg(h(x))+p^{e'}(k-1)\leq n-k-1,\\
& \deg(g(x))\leq n-k-1.
\end{align*}
Recall that $s=k-l\leq k$, then
\begin{equation}\label{eq_xiugai_1}
 \deg(\lambda h(x)f^{p^{e'}}(x)-g(x))\leq n-k-1\leq n-s-1< n-s.
\end{equation}
That is, $\lambda h(x)f^{p^{e'}}(x)=g(x)$.

On the other hand, from the first $s$ coordinates of Eq. (\ref{eq.ConA.3}), we have
\begin{equation}\notag
  \beta \lambda u_ih(a_i)f^{p^{e'}}(a_i)=u_ig(a_i)=\lambda u_i h(a_i)f^{p^{e'}}(a_i),\ 1\leq i\leq s.
\end{equation}
Therefore, $f(a_i)=0$ ($1\leq i\leq s$) for $\beta\neq 1$ and $\lambda u_ih(a_i)\neq 0$. It follows that $f(x)$ can be written as
\begin{equation}\notag
  f(x)=r(x)\prod_{i=1}^{s}(x-a_i),
\end{equation}
for some $r(x)\in \mathbb{F}_{q}[x]$ with $\deg(r(x))\leq k-1-s=l-1$. It deduces that $\dim(\Hull_{e'}(\C))\leq l$.

Conversely, let $f(x)=r(x)\prod_{i=1}^{s}(x-a_i)$, where $r(x)\in \mathbb{F}_{q}[x]$ with $\deg(r(x))\leq k-1-s=l-1$. Take $g(x)=\lambda h(x)f^{p^{e'}}(x)$, 
then $\deg(g(x))\leq \deg(h(x))+p^{e'}(k-1)\leq n-k-1$ and 
\begin{align*}
    & (\alpha^{p^{e'}+1} (v_1')^{p^{e'}+1}f^{p^{e'}}(a_1),\dots,\alpha^{p^{e'}+1} (v_s')^{p^{e'}+1}f^{p^{e'}}(a_s),(v_{s+1}')^{p^{e'}+1}f^{p^{e'}}(a_{s+1}),
    \dots, (v_n')^{p^{e'}+1}f^{p^{e'}}(a_n))\\
  = & (u_1g(a_1),\dots,u_sg(a_s),u_{s+1}g(a_{s+1}),\dots,u_ng(a_n)).
  \end{align*}
 By the result 1) of Lemma \ref{lem2.1}, the vector
 $$(\alpha v_1'f(a_1),\dots,\alpha v_s'f(a_s), v_{s+1}'f(a_{s+1}),\dots,v_n'f(a_n))\in \Hull_{e'}(\C).$$ 
 It deduces that $\dim(\Hull_{e'}(\C))\geq l$.

Combining both aspects, we have $\dim(\Hull_{e'}(\C))=l$, where $0\leq l\leq k$. The desired result follows.

2) Since $\GRS_m(\boldsymbol{a},\boldsymbol{v},\infty)\subseteq \GRS_m(\boldsymbol{a},\boldsymbol{v},\infty)^{\bot_e}$ with $m\geq 2$ and by Lemma \ref{coro.Galois self-orthogonal EGRS}, 
there exists a monic polynomial $h(x)\in \mathbb{F}_q[x]$ with $\deg(h(x))\leq n-m-1$ such that $\lambda u_ih(a_i)=v_i^{p^e+1}$, $1\leq i\leq n$. Similar to the result 1), it can be proved that there exists $v_i'\in \mathbb{F}_{q}^*$ such that
\begin{equation}\notag
  (v_i')^{p^{e'}+1}=\lambda u_ih(a_i), 1\leq i\leq n.
\end{equation}

Choose $\alpha \in \mathbb{F}_{q}^*$ satisfying $\beta=\alpha^{p^{e'}+1}\neq 1$ again. Let $\boldsymbol{a}=(a_1,a_2,\dots,a_n)$ be the same as before and $\boldsymbol{v'}=(\alpha v_1',\dots,\alpha v_s',v_{s+1}',\dots,v_n')$, where $s=k-l-1\leq k-1$. We now consider the $e'$-Galois hull of the $[n+1,k]_{q}$ MDS code $\C=\GRS_k(\boldsymbol{a},\boldsymbol{v'},\infty)$. 

For any codeword
 $$\boldsymbol{c}=(\alpha v_1'f(a_1),\dots,\alpha v_s'f(a_s), v_{s+1}'f(a_{s+1}),\dots,v_n'f(a_n),f_{k-1})\in \Hull_{e'}(\C),$$
where $f(x)\in \mathbb{F}_{q}[x]$ and $\deg(f(x))\leq k-1$. By the result 2) of Lemma \ref{lem2.1}, and replacing $\alpha^{p^{e'}+1}$ and $(v_i')^{p^{e'}+1}$ with $\beta$ and $\lambda u_ih(a_i)$, respectively, there exists a polynomial $g(x)\in \mathbb{F}_{q}[x]$ with $\deg(g(x))\leq n-k$ such that
\begin{equation}\label{eq.ConA.4}
\begin{split}
& (\beta\lambda u_1h(a_1)f^{p^{e'}}(a_1),\dots,\beta\lambda u_sh(a_s)f^{p^{e'}}(a_s), \lambda u_{s+1}h(a_{s+1})f^{p^{e'}}(a_{s+1}),\dots,\lambda u_nh(a_n)f^{p^{e'}}(a_n), \\
& f_{k-1}^{p^{e'}}) = (u_1g(a_1),\dots,u_sg(a_s),u_{s+1}g(a_{s+1}),\dots,u_ng(a_n),-g_{n-k}).
\end{split}
\end{equation}

On one hand, from the last $n-s+1$ coordinates of Eq. (\ref{eq.ConA.4}), $\lambda u_ih(a_i)f^{p^{e'}}(a_i)=u_ig(a_i)$, for $s+1\leq i\leq n$
and $f_{k-1}^{p^{e'}}=-g_{n-k}$. It follows from $\deg(h(x))\leq n-m-1$, $1\leq k\leq \lfloor \frac{p^{e'}+n-1-\deg(h(x))}{p^{e'}+1} \rfloor$ and $s=k-l-1\leq k-1$ that $\lambda h(x)f^{p^{e'}}(x)=g(x)$. 
We now determine the value of $f_{k-1}$. If $f_{k-1}\neq 0$, then $\deg(h(x))+p^{e'}(k-1)=n-k$, which contradicts to $1\leq k\leq \lfloor \frac{p^{e'}+n-1-\deg(h(x))}{p^{e'}+1} \rfloor$. Therefore, 
$f_{k-1}=0$ and $\deg(f(x))\leq k-2$.

On the other hand, from the first $s$ coordinates of Eq. (\ref{eq.ConA.4}), we have
\begin{equation}\notag
  \beta\lambda u_ih(a_i)f^{p^{e'}}(a_i)=u_ig(a_i)=\lambda u_ih(a_i)f^{p^{e'}}(a_i),\ 1\leq i\leq s.
\end{equation}
Therefore, $f(a_i)=0$ ($1\leq i\leq s$) for $\beta\neq 0$ and $\lambda u_ih(a_i)\neq 0$. It follows that $f(x)$ can be written as
\begin{equation}\notag
  f(x)=r(x)\prod_{i=1}^{s}(x-a_i),
\end{equation}
for some $r(x)\in \mathbb{F}_{q}[x]$ with $\deg(r(x))\leq k-2-s=l-1$. It deduces that $\dim(\Hull_{e'}(\C))\leq l$.

Conversely, let $f(x)=r(x)\prod_{i=1}^{s}(x-a_i)$, where $r(x)\in \mathbb{F}_{q}[x]$ with $\deg(r(x))\leq k-2-s=l-1$. Take $g(x)=\lambda h(x)f^{p^{e'}}(x)$, 
then $\deg(g(x))\leq \deg(h(x))+p^{e'}(k-2)\leq n-k-1$ and 
\begin{align*}
    & (\alpha^{p^{e'}+1} (v_1')^{p^{e'}+1}f^{p^{e'}}(a_1),\dots,\alpha^{p^{e'}+1} (v_s')^{p^{e'}+1}f^{p^{e'}}(a_s),(v_{s+1}')^{p^{e'}+1}f^{p^{e'}}(a_{s+1}),
    \dots, (v_n')^{p^{e'}+1}f^{p^{e'}}(a_n),0)\\
  = & (u_1g(a_1),\dots,u_sg(a_s),u_{s+1}g(a_{s+1}),\dots,u_ng(a_n),0).
  \end{align*}
 By the result 2) of Lemma \ref{lem2.1}, the vector 
 $$(\alpha v_1'f(a_1),\dots,\alpha v_s'f(a_s), v_{s+1}'f(a_{s+1}),\dots,v_n'f(a_n),0)\in \Hull_{e'}(\C).$$ It deduces that $\dim(\Hull_{e'}(\C))\geq l$

Combining both aspects, we have $\dim(\Hull_{e'}(\C))=l$, where $0\leq l\leq k-1$. The desired result follows. 
\end{proof}

\begin{remark}\label{rem2}
\begin{enumerate}
\item [\rm 1)] In the light of our present knowledge, there is relatively little work on general Galois self-orthogonal (extended) GRS codes. Classes $10,\ 12,\ 15$ and $19$ in Table \ref{tab:1} are some examples.	
\item [\rm 2)] For the case $e=0$, a similar result was given by Theorems 4.1 and 4.2 of \cite{RefJ30}. We have list them as Classes $19$ and $20$ in Table \ref{tab:1}. 
\end{enumerate}
\end{remark}

\begin{table}[!htb]
    \centering
    \caption{Some examples satisfying the conditions that $\gcd(e',h)=e$ and $\frac{h}{e}$ is even}
    \label{tab:3}       
      \begin{tabular}{c|c|l|c|c|l}
        \hline
        $h$ & $e$ & $e'$ & $h$ & $e$ & $e'$ \\
        \hline
          2 & 1 & 1 & 4 & 1 & 1, 3\\
          4 & 2 & 2 & 6 & 1 & 1, 5 \\
          6 & 3 & 3 & 8 & 1 & 1, 3, 5, 7\\
          8 & 2 & 2, 6 & 8 & 4 & 4\\
          10 & 1 & 1, 3, 7, 9 & 10 & 5 & 5\\
          12 & 1 & 1, 5, 7, 11 & 12 & 2 & 2, 10\\
          12 & 3 & 3, 9 & 12 & 6 & 6\\
        \hline
      \end{tabular}
    \end{table}

\begin{example}\label{example.thA.1}
    In Table \ref{tab:3}, we give some examples satisfying the conditions that $e=\gcd(e',h)$ and $\frac{h}{e}$ is even. By Theorem \ref{th.ConA}, we can obtain more 
    MDS codes with $e'$-Galois hulls of arbitrary dimensions from Classes $10,\ 12,\ 15$ and $19$. Taking Class $15$ as an example, we show the specific steps as follows:
    
    $\bullet \textbf{Step 1. Obtain e-Galois self-orthogonal codes.}$ 
    
    For example, taking $(p,h,t,e)=(3,8,2,1)$ in Class $15$ (i.e., Theorem \Rmnum{3}.2 of \cite[]{RefJ29}), we know that $[6561,k]_{3^8}$ $l$-dim $1$-Galois hull GRS codes exist, where 
    $1\leq k\leq 1640$ and $0\leq l\leq k$. Take $l=k=1640$, then $[6561,1640]_{3^8}$ is a $1$-Galois self-orthogonal GRS code. 
    
    $\bullet \textbf{Step 2. Determine {\bf{deg}}(h(x)).}$ 

    According to Remark \ref{rem.1}, $h(x)=\lambda^{-1}g(x)$. Hence, we only need to determine the structure of $g(x)$. This is easy to do because, as the result 2) of Remark \ref{rem.1} 
    said, $g(x)=f(x)^{p^e}$ was given explicitly in the proof of Theorem \Rmnum{3}.2 of \cite[]{RefJ29}. Taking $f(x)=1$ further, we have $g(x)=f(x)^{p^e}=1$. Therefore, we can directly 
    determine that $h(x)=\lambda^{-1}g(x)=1$ and $\deg(h(x))=0$, where $\lambda=1\in \mathbb{F}_{3^8}^*$. 
    
    $\bullet \textbf{Step 3. Derive new MDS codes with e'-Galois hulls of arbitrary dimensions.}$ 
    
    By the result 1) of Theorem \ref{th.ConA} and Table \ref{tab:3}, MDS codes with $1,\ 3,\ 5$ and $7$-Galois hulls of arbitrary dimensions can be derived as follows:

    \begin{itemize}
      \item	$[6561, k_1]_{3^8}$ MDS code with $l$-dim $1$-Galois hull for $1\leq k_1\leq 1640$, where $0\leq l\leq k_1$;
      \item $[6561, k_3]_{3^8}$ MDS code with $l$-dim $3$-Galois hull for $1\leq k_3\leq 235$, where $0\leq l\leq k_3$;
      \item $[6561, k_5]_{3^8}$ MDS code with $l$-dim $5$-Galois hull for $1\leq k_5\leq 27$, where $0\leq l\leq k_5$;
      \item $[6561, k_7]_{3^8}$ MDS code with $l$-dim $7$-Galois hull for $1\leq k_7\leq 3$, where $0\leq l\leq k_7$;
      \end{itemize}
    
    Note that $2\times 3=6$, $2\times 5=10$, $2\times 7=14$ are not divisors of $h=8$, hence they cannot be constructed by 
    Classes $10,\ 12,\ 15$ and $19$, which implies that all of these MDS codes with Galois hulls of arbitrary dimensions are new. 
\end{example}

\subsection{MDS codes with Galois hulls of arbitrary dimensions from Hermitian self-orthogonal (extended) GRS codes}\label{constructionB}

  Throughout this subsection, we take $h$ as an even positive integer and $e=\frac{h}{2}$. For any prime $p$, let $q=p^h$, then $\sqrt{q}$ is still a prime power. 
  Since the Galois hull is a generalization of the Hermitian hull, we can conclude that $\GRS_k(\boldsymbol{a},\boldsymbol{v})$ (resp. $\GRS_k(\boldsymbol{a},\boldsymbol{v},\infty)$) 
  is Hermitian self-orthogonal if $\GRS_k(\boldsymbol{a},\boldsymbol{v})\subseteq \GRS_k(\boldsymbol{a},\boldsymbol{v})^{\bot_{\frac{h}{2}}}$ 
  (resp. $\GRS_k(\boldsymbol{a},\boldsymbol{v},\infty)\subseteq \GRS_k(\boldsymbol{a},\boldsymbol{v},\infty)^{\bot_{\frac{h}{2}}}$). In this subsection, 
  our main goal is to construct MDS codes with Galois hulls of arbitrary dimensions from Hermitian self-orthogonal (extended) GRS codes. To this end, we need the following lemma.
 
  \begin{lemma}\label{coro.Hermitian self-orthogonal GRS and EGRS}
   Let $q=p^h$, where $p$ is a prime and $h$ is an even positive integer.
    \begin{enumerate}
    \item [\rm 1)]	If $\GRS_k(\boldsymbol{a},\boldsymbol{v})\subseteq \GRS_k(\boldsymbol{a},\boldsymbol{v})^{\bot_{\frac{h}{2}}}$, then there exists a monic polynomial $h(x)\in \mathbb{F}_q[x]$ with 
    $\deg(h(x))\\\leq n-k-1$ such that $\lambda u_ih(a_i)=v_i^{\sqrt{q}+1}$, $1\leq i\leq n$, where $\lambda\in \mathbb{F}_{q}^*$.
    \item [\rm 2)] If $\GRS_k(\boldsymbol{a},\boldsymbol{v},\infty)\subseteq \GRS_k(\boldsymbol{a},\boldsymbol{v},\infty)^{\bot_{\frac{h}{2}}}$ with $k\geq 2$, then there exists a monic polynomial 
    $h(x)\in \mathbb{F}_q[x]$ with $\deg(h(x))\leq n-k-1$ such that $\lambda u_ih(a_i)=v_i^{\sqrt{q}+1}$, $1\leq i\leq n$, where $\lambda\in \mathbb{F}_{q}^*$.
    \end{enumerate}
  \end{lemma}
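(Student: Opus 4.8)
The key observation is that this lemma is nothing more than the specialization of Lemmas \ref{coro.Galois self-orthogonal GRS} and \ref{coro.Galois self-orthogonal EGRS} to the case $e=\frac{h}{2}$. Indeed, since $h$ is even and $q=p^h$, we have $p^{h/2}=\sqrt q$, so $p^e+1=\sqrt q+1$; moreover, as recalled in the introduction, the $\frac{h}{2}$-Galois inner product coincides with the Hermitian inner product, and hence the Hermitian self-orthogonality condition $\GRS_k(\boldsymbol{a},\boldsymbol{v})\subseteq \GRS_k(\boldsymbol{a},\boldsymbol{v})^{\bot_{h/2}}$ is exactly the $\frac{h}{2}$-Galois self-orthogonality condition, and similarly for the extended code.

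For part 1), the plan is simply to invoke Lemma \ref{coro.Galois self-orthogonal GRS} with $e$ replaced by $\frac h2$: it yields a monic $h(x)\in\mathbb F_q[x]$ with $\deg(h(x))\le n-k-1$ and $\lambda u_i h(a_i)=v_i^{p^{h/2}+1}=v_i^{\sqrt q+1}$ for $1\le i\le n$ and some $\lambda\in\mathbb F_q^*$, which is the claimed statement. For part 2), I would likewise apply Lemma \ref{coro.Galois self-orthogonal EGRS} with $e=\frac h2$ under the hypothesis $k\ge 2$; recall that in the proof of that lemma the choice $f(x)=1$ forces the top coefficient $g_{n-k}=-f_{k-1}^{p^e}=0$, so $\deg(g(x))\le n-k-1$ and the resulting monic $h(x)=\lambda^{-1}g(x)$ again satisfies $\deg(h(x))\le n-k-1$ and $\lambda u_i h(a_i)=v_i^{\sqrt q+1}$.

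There is essentially no obstacle here: the entire content is the identification $p^{h/2}=\sqrt q$ together with the coincidence of the Hermitian and $\frac h2$-Galois inner products, after which both statements are immediate consequences of the two preceding lemmas. The only point worth stating explicitly is why $\sqrt q$ is a genuine prime power — because $h$ is even, $\sqrt q=p^{h/2}$ with $h/2$ a positive integer — so that the exponent $\sqrt q+1$ and the notion of Hermitian orthogonality over $\mathbb F_q$ make sense; this has already been noted at the start of Subsection \ref{constructionB}.
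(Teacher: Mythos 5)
Your proposal is correct and matches the paper, which omits the proof precisely because it is the same argument as Lemmas \ref{coro.Galois self-orthogonal GRS} and \ref{coro.Galois self-orthogonal EGRS}; your observation that one can simply specialize those lemmas to $e=\frac{h}{2}$, using $p^{h/2}=\sqrt{q}$ and the identification of the Hermitian with the $\frac{h}{2}$-Galois inner product, is exactly the intended reasoning.
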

\begin{proof}
	The proof of this lemma is similar to Lemma \ref{coro.Galois self-orthogonal GRS}, so it is omitted.
\end{proof}

\begin{remark}\label{rem.3}
    Similar to Remark \ref{rem.1}, the structure and degree of $h(x)$ here can also be easily determined.
\end{remark}

  \begin{theorem}\label{th.ConB.1}
  Let $q=p^h$ be an odd prime power, where $h$ is an even positive integer. Let $0\leq e'\leq h-1$ and such that $\frac{h}{\gcd(e',h)}$ is odd. Then the following hold.
  \begin{enumerate}
  \item [\rm 1)]Suppose $\GRS_m(\boldsymbol{a},\boldsymbol{v})\subseteq \GRS_m(\boldsymbol{a},\boldsymbol{v})^{\bot_{\frac{h}{2}}}$, then for $1\leq k\leq \lfloor \frac{p^{e'}+n-1-\deg(h(x))}{p^{e'}+1} \rfloor$, 
  there exists an $[n, k]_{q}$ MDS code $\C$ with $\dim(\Hull_{e'}(\C))=l$, where $0\leq l\leq k$.
  \item [\rm 2)]Suppose $\GRS_m(\boldsymbol{a},\boldsymbol{v},\infty)\subseteq \GRS_m(\boldsymbol{a},\boldsymbol{v},\infty)^{\bot_{\frac{h}{2}}}$ with $m\geq 2$, then for 
  $1\leq k\leq \lfloor \frac{p^{e'}+n-1-\deg(h(x))}{p^{e'}+1} \rfloor$, there exists an $[n+1,k]_{q}$ MDS code $\C$ with $\dim(\Hull_{e'}(\C))=l$, where $0\leq l\leq k-1$.	
  \end{enumerate}
  \end{theorem}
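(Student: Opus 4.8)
The plan is to run the same machinery used for Theorem \ref{th.ConA}, with the Hermitian exponent $\sqrt{q}+1$ in place of $p^{e}+1$, and the only substantive new input being the passage from a $(\sqrt{q}+1)$-th power to a $(p^{e'}+1)$-th power. First I would record, via Lemma \ref{lem.gcd}, that since $p$ is odd and $\frac{h}{\gcd(e',h)}$ is odd we have $\gcd(p^{e'}+1,q-1)=2$. Consequently the map $x\mapsto x^{p^{e'}+1}$ on $\F_q^*$ is two-to-one onto the subgroup of squares, which, because $q=p^h$ with $h$ even forces $q\geq 9$, has more than one element; hence there is $\alpha\in\F_q^*$ with $\beta:=\alpha^{p^{e'}+1}\neq 1$.

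Next comes the bridging step. By Lemma \ref{coro.Hermitian self-orthogonal GRS and EGRS} (part 1 for the GRS case, part 2 for the extended case, where $m\geq 2$ supplies the needed hypothesis $k\geq 2$), the assumption $\GRS_m(\boldsymbol{a},\boldsymbol{v})\subseteq\GRS_m(\boldsymbol{a},\boldsymbol{v})^{\bot_{\frac{h}{2}}}$ (resp. for the extended code) yields a monic $h(x)\in\F_q[x]$ with $\deg(h(x))\leq n-m-1$ and $\lambda u_ih(a_i)=v_i^{\sqrt{q}+1}$ for $1\leq i\leq n$, where $\lambda\in\F_q^*$. Since $p$ is odd, $\sqrt{q}+1=p^{h/2}+1$ is even, so $v_i^{\sqrt{q}+1}=(v_i^{(\sqrt{q}+1)/2})^2$ is a square in $\F_q^*$ and therefore lies in the image of $x\mapsto x^{p^{e'}+1}$. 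Concretely, choosing integers $\mu,\nu$ with $\mu(p^{e'}+1)+\nu(q-1)=2$ and setting $v_i'=v_i^{\mu(\sqrt{q}+1)/2}$, a short exponent computation modulo $q-1$ gives $(v_i')^{p^{e'}+1}=v_i^{\sqrt{q}+1}=\lambda u_ih(a_i)$ for all $i$.

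From this point the argument is essentially verbatim the proof of Theorem \ref{th.ConA}. For part 1), put $s=k-l$ and $\boldsymbol{v'}=(\alpha v_1',\dots,\alpha v_s',v_{s+1}',\dots,v_n')\in(\F_q^*)^n$, and analyze $\Hull_{e'}(\C)$ for $\C=\GRS_k(\boldsymbol{a},\boldsymbol{v'})$ via Lemma \ref{lem2.1} 1): replacing $\alpha^{p^{e'}+1}$ and $(v_i')^{p^{e'}+1}$ by $\beta$ and $\lambda u_ih(a_i)$ produces $g(x)$ with $\deg(g(x))\leq n-k-1$ such that $\lambda h(x)f^{p^{e'}}(x)$ and $g(x)$ agree at $a_{s+1},\dots,a_n$; the range $k\leq\lfloor\frac{p^{e'}+n-1-\deg(h(x))}{p^{e'}+1}\rfloor$ makes the degree of the difference strictly less than $n-s$, hence $\lambda h(x)f^{p^{e'}}(x)=g(x)$, and then the first $s$ coordinates give $\beta\lambda u_ih(a_i)f^{p^{e'}}(a_i)=\lambda u_ih(a_i)f^{p^{e'}}(a_i)$ with $\beta\neq1$, forcing $f(a_i)=0$ for $1\leq i\leq s$, so $f(x)=r(x)\prod_{i=1}^{s}(x-a_i)$ with $\deg(r(x))\leq l-1$; the reverse inclusion follows by exhibiting, for each such $r(x)$, a hull codeword through Lemma \ref{lem2.1} 1) with $g(x)=\lambda h(x)f^{p^{e'}}(x)$. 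Part 2) is the analogous computation with Lemma \ref{lem2.1} 2), $s=k-l-1$, and the extra observation that $f_{k-1}\neq 0$ would force $\deg(h(x))+p^{e'}(k-1)=n-k$, contradicting the range of $k$, whence $f_{k-1}=0$.

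I expect the only genuinely new point, compared with Theorem \ref{th.ConA}, to be the bridging step of the second paragraph: there $\gcd(p^{e'}+1,p^h-1)=p^{e}+1$ absorbed the exponent directly, whereas here $\gcd(p^{e'}+1,p^h-1)=2$ and one must instead exploit that the Hermitian exponent $\sqrt{q}+1$ is even in odd characteristic to realize $v_i^{\sqrt{q}+1}$ as a $(p^{e'}+1)$-th power. The hull computation, the degree bookkeeping, and the two-sided dimension count then carry over unchanged, so the proof can reasonably be presented by first establishing the bridging identity and then referring to the proof of Theorem \ref{th.ConA}.
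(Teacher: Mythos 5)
Your proposal is correct and follows essentially the same route as the paper: establish the bridging identity $(v_i')^{p^{e'}+1}=\lambda u_ih(a_i)$ by writing $v_i^{\sqrt{q}+1}$ as a square and using $\gcd(p^{e'}+1,q-1)=2$ from Lemma \ref{lem.gcd}, then repeat the hull computation and degree bookkeeping of Theorem \ref{th.ConA} for both the GRS and extended GRS cases. The only cosmetic difference is that you obtain squareness directly from the evenness of $\sqrt{q}+1$ in odd characteristic, whereas the paper first notes $v_i^{\sqrt{q}+1}\in\F_{\sqrt{q}}$ and then that subfield elements are squares in $\F_q$; both yield the same identity, so the arguments coincide.
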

  \begin{proof} 1) From $\GRS_m(\boldsymbol{a},\boldsymbol{v})\subseteq \GRS_m(\boldsymbol{a},\boldsymbol{v})^{\bot_{\frac{h}{2}}}$ and the result 1) of Lemma
  \ref{coro.Hermitian self-orthogonal GRS and EGRS}, there exists a monic polynomial $h(x)\in \mathbb{F}_q[x]$ with $\deg(h(x))\leq n-m-1$ such that
  \begin{align}\label{eq.ConB.1}
    \lambda u_ih(a_i)=v_i^{\sqrt{q}+1}\neq 0,\ 1\leq i\leq n,
  \end{align} 
  where $\lambda\in \mathbb{F}_{q}^*$. Note that $(v_i^{\sqrt{q}+1})^{\sqrt{q}}=v_i^{q-1+\sqrt{q}+1}=v_i^{\sqrt{q}+1}$, 
  then $v_i^{\sqrt{q}+1}\in \mathbb{F}_{\sqrt{q}}\subseteq \mathbb{F}_{q}$ for $1\leq i\leq n$. Hence there exists $v_i'\in \mathbb{F}_{q}^*$ such that
  \begin{equation}\label{eq.ConB.2}
    v_i^{\sqrt{q}+1}=v_i'^2,\ 1\leq i\leq n.
  \end{equation}
 Since both $\frac{h}{\gcd(e',h)}$ and $p$ are odd, by Lemma \ref{lem.gcd}, $\gcd(p^{e'}+1, p^{h}-1)=2$. Therefore, there exist two integers $\mu$ and $\nu$ such that 
  $\mu (p^{e'}+1)+\nu (p^{h}-1)=2$. Substituting it into Eqs. (\ref{eq.ConB.1}) and (\ref{eq.ConB.2}), we can get $(v_i')^{\mu (p^{e'}+1)}=\lambda u_ih(a_i),\ 1\leq i\leq n$. 
  Denote $v_i''=(v_i')^{\mu},\ 1\leq i\leq n$, then
  \begin{equation}\notag
    (v_i'')^{p^{e'}+1}=\lambda u_ih(a_i),\ 1\leq i\leq n.
  \end{equation}
  
  Since $(q-1)\nmid (p^{e'}+1)$, there is an $\alpha \in \mathbb{F}_{q}^*$ such that $\beta=\alpha^{p^{e'}+1}\neq 1$. 
  Let $\boldsymbol{a}=(a_1,a_2,\dots,a_n)$ be the same as before and 
  $\boldsymbol{v''}=(\alpha v_1'',\dots,\alpha v_s'',v_{s+1}'',\dots,v_n'')$, where $s=k-l\leq k$. We now consider the $e'$-Galois hull of the $[n, k]_{q}$ MDS code $\C=\GRS_k(\boldsymbol{a},\boldsymbol{v''})$. 

For any codeword 
$$\boldsymbol{c}=(\alpha v_1''f(a_1),\dots,\alpha v_s''f(a_s), v_{s+1}''f(a_{s+1}),\dots,v_n''f(a_n))\in \Hull_{e'}(\C),$$
where $f(x)\in \mathbb{F}_{q}[x]$ and $\deg(f(x))\leq k-1$. Replace $\alpha^{p^{e'}+1}$ and $(v_i'')^{p^{e'}+1}$ with $\beta$ and $\lambda u_ih(a_i)$, respectively. 
By the result 1) of Lemma \ref{lem2.1}, there exists a polynomial $g(x)\in \mathbb{F}_{q}[x]$ with $\deg(g(x))\leq n-k-1$ such that
\begin{equation}\label{eq.ConB.3}
\begin{split}
& (\beta \lambda u_1h(a_1)f^{p^{e'}}(a_1),\dots,\beta \lambda u_sh(a_s)f^{p^{e'}}(a_s),\lambda u_{s+1}h(a_{s+1})f^{p^{e'}}(a_{s+1}),\dots, \lambda u_nh(a_n)f^{p^{e'}}(a_n))\\
= & (u_1g(a_1),\dots,u_sg(a_s),u_{s+1}g(a_{s+1}),\dots,u_ng(a_n)).
\end{split}
\end{equation}
Similar to the result 1) of Theorem \ref{th.ConA}, we can deduce that $\lambda h(x)f^{p^{e'}}(x)=g(x)$ from the last $n-s$ coordinates of Eq. (\ref{eq.ConB.3}). 
Since $\beta \neq 1$ and $\lambda u_ih(a_i)\neq 0$, from the first $s$ coordinates of Eq. (\ref{eq.ConB.3}), we can derive that $f(a_i)=0$ for $1\leq i\leq s$. 
Hence, $f(x)$ can be written as
\begin{equation}\notag
  f(x)=r(x)\prod_{i=1}^{s}(x-a_i)
\end{equation}
for some $r(x)\in \mathbb{F}_{q}[x]$ with $\deg(r(x))\leq k-1-s=l-1$. It deduces that $\dim(\Hull_{e'}(\C))\leq l$.

Conversely, for any $f(x)=r(x)\prod_{i=1}^{s}(x-a_i)$, where $r(x)\in \mathbb{F}_{q}[x]$ with $\deg(r(x))\leq k-1-s=l-1$. Let $g(x)=\lambda h(x)f^{p^{e'}}(x)$. Similar to the proof 
of the result 1) of Theorem \ref{th.ConA} again, we can deduce that $\dim(\Hull_{e'}(\C))\geq l$.

Combining both aspects, we have $\dim(\Hull_{e'}(\C))=l$, where $0\leq l\leq k$. The desired result follows.

2) Since $\GRS_m(\boldsymbol{a},\boldsymbol{v},\infty)\subseteq \GRS_m(\boldsymbol{a},\boldsymbol{v},\infty)^{\bot_{\frac{h}{2}}}$ with $m\geq 2$ and by the result 2) of Lemma 
\ref{coro.Hermitian self-orthogonal GRS and EGRS}, there exists a monic polynomial $h(x)\in \mathbb{F}_q[x]$ with $\deg(h(x))\leq n-m-1$ such that 
$\lambda u_i h(a_i)=v_i^{\sqrt{q}+1}$, $1\leq i\leq n$. For the same reason with 1) above, there exists $v_i'\in \mathbb{F}_{q}^*$ such that
  \begin{equation}\notag
    (v_i')^{p^{e'}+1}=\lambda u_ih(a_i),\ 1\leq i\leq n.
  \end{equation}
  
  Let $\alpha \in \mathbb{F}_{q}^*$ such that $\beta=\alpha^{p^{e'}+1}\neq 1$. Let $\boldsymbol{a}=(a_1,a_2,\dots,a_n)$ be the same as before and 
  $\boldsymbol{v'}=(\alpha v_1',\dots,\alpha v_s',v_{s+1}',\dots,v_n')$, where $s=k-l-1\leq k-1$. We now consider the $e'$-Galois hull of the $[n+1,k]_{q}$ MDS code $\C=\GRS_k(\boldsymbol{a},\boldsymbol{v'},\infty)$. 
  
  A completely similar argument to 1) above and Part 2) of Theorem \ref{th.ConA} implies $\dim(\Hull_{e'}(\C))=l$, where $0\leq l\leq k-1$. This completes the proof.
  \end{proof}

  \begin{remark}\label{rem4}
  \begin{enumerate}
  \item [\rm 1)] According to Theorem \ref{th.ConB.1}, we can construct MDS codes with Galois hulls of arbitrary dimensions via Hermitian self-orthogonal (extended) GRS codes. 
  As we know, there are lots of Hermitian self-orthogonal (extended) GRS codes constructed in previous works.	
  \item [\rm 2)] Note that the conditions of Theorems \ref{th.ConA} and  \ref{th.ConB.1} are different, so they can lead to different results in general. We list some examples satisfying the 
  condition that $\frac{h}{\gcd(e',h)}$ is odd in Table \ref{tab:3}. Then the following example allows us to intuitively see the difference between Theorems \ref{th.ConA} and 
  \ref{th.ConB.1}. Start from the Hermitian (i.e., $5$-Galois) self-orthogonal (extended) GRS code over $\F_{p^{10}}$ with odd prime $p$. From Table \ref{tab:4}, we can see that 
  MDS codes with $0,\ 2,\ 4,\ 6$ and $8$-Galois hulls of arbitrary dimensions can be obtained by Theorem \ref{th.ConB.1}, while from Table \ref{tab:3}, we can see that only MDS codes with $5$-Galois 
  hulls of arbitrary dimensions can be derived from Theorem \ref{th.ConA}. Clearly, they are totally different.
  \end{enumerate}
  \end{remark}

  \begin{table}[!htb]
    \centering
    \caption{Some examples satisfying the condition that $\frac{h}{\gcd(e',h)}$ is odd}
    \label{tab:4}       
      \begin{tabular}{c|l|c|l}
        \hline
        $h$ & $e'$ & $h$ & $e'$ \\
        \hline
          $4$ & $0$ & $6$ & $0, 2, 4$ \\
          $8$ & $0$ & $10$ & $0, 2, 4, 6, 8$\\
          $12$ & $0, 4, 8$ & $14$ & $0, 2, 4, 6, 8, 10, 12$\\
        \hline
      \end{tabular}
    \end{table}

Note that in some known constructions of Hermitian self-orthogonal (extended) GRS codes, the dimension $k$ is usually roughly upper bounded by 
$\lfloor \frac{\sqrt{q}+n-1}{\sqrt{q}+1}\rfloor$ (e.g., see \cite[]{RefJ24,RefJ7,RefJ33,RefJ110,RefJ111} and references therein), while the dimension $k$ of our 
MDS codes with $e'$-Galois hulls of arbitrary dimensions constructed by Theorem \ref{th.ConB.1} is upper bounded by $\lfloor \frac{p^{e'}+n-1-\deg(h(x))}{p^{e'}+1}\rfloor$. 
We compare the magnitude of these two bounds in the following theorem, and it turns out that in some cases the range of dimension of the new MDS codes with $e'$-Galois hulls 
of arbitrary dimensions will be wilder.

\begin{theorem}\label{th.compare the magnitude}
  Let $q=p^h$ be a prime power, where $h$ is an even positive integer. Let $0\leq e'< \frac{h}{2}$ be an integer and $\deg(h(x))$ be known. Then 
  $\lfloor \frac{p^{e'}+n-1-\deg(h(x))}{p^{e'}+1}\rfloor > \lfloor \frac{\sqrt{q}+n-1}{\sqrt{q}+1}\rfloor$ if one of the following two conditions holds. 
  \begin{enumerate}
  \item [\rm 1)] $\deg(h(x))=0$ and $n\geq 3$; 
  \item [\rm 2)] $\deg(h(x))>0$, $0\leq e'\leq \lfloor \log_p\frac{\sqrt{q}(n-3)-(\sqrt{q}+1)\deg(h(x))-1}{\sqrt{q}+n-1} \rfloor$ and $n\geq \lfloor \frac{4\sqrt{q}+(\sqrt{q}+1)\deg(h(x))}{\sqrt{q}-1} \rfloor$.
  \end{enumerate}
\end{theorem}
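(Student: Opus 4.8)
The plan is to strip the floors, reduce the claim to an inequality between two explicit rational functions of $n$, invoke a one-line floor estimate, and then observe that the two hypotheses in the theorem are exactly what the resulting sufficient condition becomes once it is rearranged, first for $e'$ and then for $n$.

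First I would fix notation: write $Q=\sqrt q=p^{h/2}$, $P=p^{e'}$ and $d=\deg h(x)$, and record that $0\le e'<\tfrac h2$ with $h$ even forces $P\le p^{h/2-1}=Q/p\le Q/2$, so in particular $Q>P\ge 1$ and $P+1\le Q$. Since $1\in\mathbb Z$, one may peel a $1$ off each side:
\[ \frac{P+n-1-d}{P+1}=1+\frac{n-2-d}{P+1},\qquad \frac{Q+n-1}{Q+1}=1+\frac{n-2}{Q+1}, \]
so the claimed inequality is equivalent to $\bigl\lfloor\tfrac{n-2-d}{P+1}\bigr\rfloor>\bigl\lfloor\tfrac{n-2}{Q+1}\bigr\rfloor$ (and the hypotheses will force $n-2-d\ge 0$, so both fractions are nonnegative). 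The elementary fact to use is: if $x\ge y+1$ then $\lfloor x\rfloor\ge\lfloor y+1\rfloor=\lfloor y\rfloor+1>\lfloor y\rfloor$. Hence it suffices to prove
\[ \frac{n-2-d}{P+1}\ \ge\ \frac{n-2}{Q+1}+1\ =\ \frac{Q+n-1}{Q+1}, \]
i.e., clearing the positive denominators, $(Q+1)(n-2-d)\ge(P+1)(Q+n-1)$.

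This inequality is linear in $P$, and solving it gives $P\le\dfrac{(Q+1)(n-2-d)-(Q+n-1)}{Q+n-1}=\dfrac{Q(n-3)-(Q+1)d-1}{Q+n-1}$. Taking $\log_p$ and using that $e'$ is a nonnegative integer turns this into $e'\le\bigl\lfloor\log_p\tfrac{\sqrt q(n-3)-(\sqrt q+1)\deg h(x)-1}{\sqrt q+n-1}\bigr\rfloor$, which is precisely the first hypothesis of part 2); part 1) is the specialisation $d=0$, where in addition $P+1\le Q$ gives $\tfrac{n-2}{P+1}\ge\tfrac{n-2}{Q+1}$ directly and the rest of the argument goes through with $n\ge 3$ ensuring $n-2\ge 1$. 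Finally, to see that the admissible set of $e'$ in part 2) is nonempty — equivalently that $e'=0$ already satisfies the displayed bound, i.e. the argument of the logarithm is at least $1$ — I would check
\[ \frac{Q(n-3)-(Q+1)d-1}{Q+n-1}\ \ge\ 1 \iff n(Q-1)\ \ge\ 4Q+(Q+1)d \iff n\ \ge\ \frac{4\sqrt q+(\sqrt q+1)\deg h(x)}{\sqrt q-1}, \]
which is exactly the second hypothesis of part 2); the same chain also yields $n-3>\tfrac{(Q+1)d+1}{Q}$, so the logarithm's argument is positive (hence defined) and $n-2-d>0$.

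There is no real conceptual obstacle: the whole argument is the two-line floor estimate applied to linear functions of $n,P,Q,d$. The only thing requiring attention is the algebraic bookkeeping, so that the rearranged sufficient condition lands exactly on the stated quantities $\lfloor\log_p(\cdots)\rfloor$ and $\lfloor\tfrac{4\sqrt q+(\sqrt q+1)\deg h(x)}{\sqrt q-1}\rfloor$, together with care about which of the intermediate inequalities are strict. I would present part 2) in full and dispatch part 1) as the $d=0$ specialisation.
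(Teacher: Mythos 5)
Your part 2) is correct and is essentially the paper's own argument. After clearing denominators, your sufficient inequality $(\sqrt q+1)\bigl(n-2-\deg(h(x))\bigr)\ge(p^{e'}+1)(\sqrt q+n-1)$ is exactly the paper's condition $\Delta k\ge 1$, and the floor estimate ``$x\ge y+1\Rightarrow\lfloor x\rfloor>\lfloor y\rfloor$'' is the same final step. You in fact do more than the paper, which only asserts ``it is easy to verify'': you show that the stated bound on $e'$ is precisely the rearrangement of this inequality (so the $e'$-hypothesis alone drives the conclusion), and that the hypothesis on $n$ serves to make the range of admissible $e'$ nonempty and the logarithm well defined. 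One small inaccuracy: the theorem's second hypothesis is $n\ge\bigl\lfloor\frac{4\sqrt q+(\sqrt q+1)\deg(h(x))}{\sqrt q-1}\bigr\rfloor$, with a floor, which is slightly weaker than the unfloored inequality you call ``exactly the second hypothesis''; this does not affect the main implication, but your closing claims that the logarithm's argument is $\ge 1$ and that $n-2-\deg(h(x))>0$ are derived from the unfloored version.

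Part 1), however, contains a genuine gap. You dispatch it as the $d=0$ specialisation, saying that $p^{e'}+1\le\sqrt q$ gives $\frac{n-2}{p^{e'}+1}\ge\frac{n-2}{\sqrt q+1}$ and that ``the rest of the argument goes through'' for $n\ge3$. But your method requires a gap of at least $1$ between the two real quantities; the weak inequality you invoke only yields $\lfloor\cdot\rfloor\ge\lfloor\cdot\rfloor$, not the strict inequality in the conclusion. Moreover the needed inequality $\frac{n-2}{p^{e'}+1}\ge\frac{n-2}{\sqrt q+1}+1$ genuinely fails in small cases: for $q=9$, $e'=0$, $n=3$, $\deg(h(x))=0$ both sides of the theorem's conclusion equal $1$, so no amount of bookkeeping along your route (or any other) closes this case from $n\ge3$ alone. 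For comparison, the paper handles part 1) differently, by showing $e'\mapsto\frac{p^{e'}+n-1}{p^{e'}+1}$ is decreasing for $n\ge 3$; that argument has the same weakness (monotonicity of the real-valued function gives only a non-strict comparison of the floors), so the defect here lies partly in the statement of part 1) for small $n$ rather than in something you overlooked in part 2) — but as written, your reduction of part 1) to the $d=0$ case of the part 2) computation does not go through and should either be supplemented with an extra hypothesis of the form $(n-2)(\sqrt q-p^{e'})\ge(p^{e'}+1)(\sqrt q+1)$ or argued by a different, integrality-based route.
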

  \begin{proof}
    For the condition 1), we note that when $\deg(h(x))=0$, the new upper bound is $\lfloor \frac{p^{e'}+n-1}{p^{e'}+1} \rfloor$. Let $k(e')=\frac{p^{e'}+n-1}{p^{e'}+1}$ be a function of $e'$, 
    where  $0\leq e'< \frac{h}{2}$. Easy to calculate, the first derivative of $k(e')$ is $$k'(e')=\frac{(2-n)p^{e'}lnp}{(p^{e'}+1)^2},\ 0\leq e'< \frac{h}{2}.$$
    It follows that $k(e')$ is monotonically decreasing for each $0\leq e'< \frac{h}{2}$ from the fact $k'(e')<0$ for $n\geq 3$. Since $0\leq e'< \frac{h}{2}$ is an integer, we 
    can easily conclude that the desired result holds.

   For the condition 2), let $\bigtriangleup k=\frac{p^{e'}+n-1-\deg(h(x))}{p^{e'}+1}- \frac{\sqrt{q}+n-1}{\sqrt{q}+1}$. Then it is easy to verify that under the condition 2), we have   
    \begin{align*}
      \bigtriangleup k= & \frac{(p^{e'}+n-1-\deg(h(x)))(\sqrt{q}+1)-(\sqrt{q}+n-1)(p^{e'}+1)}{(p^{e'}+1)(\sqrt{q}+1)} \\
                      = & \frac{(n-2)(\sqrt{q}-p^{e'})-(\sqrt{q}+1)\deg(h(x))}{(p^{e'}+1)(\sqrt{q}+1)} \\
                      \geq & 1.
    \end{align*} 
    It follows that $\lfloor \frac{p^{e'}+n-1-\deg(h(x))}{p^{e'}+1}\rfloor > \lfloor \frac{\sqrt{q}+n-1}{\sqrt{q}+1}\rfloor$. This completes the proof.
  \end{proof}

  \begin{remark}\label{rem5}
    \begin{enumerate}
      \item [1)] We consider condition 1) and condition 2) separately in Theorem \ref{th.compare the magnitude}. On one hand, condition 1) is more explicit and has a weaker requirement for $n$  
      than condition 2). Specifically, taking $\deg(h(x))=0$ in condition 2), it requires $n\geq \lfloor \frac{4\sqrt{q}}{\sqrt{q}-1} \rfloor=4+\lfloor \frac{4}{\sqrt{q}-1}\rfloor$, while $n\geq 3$ 
      is required in condition 1). 
      On the other hand, clearly, $\deg(h(x))=0$ is a relatively special situation. 

      \item [2)] According to Remarks \ref{rem.1} and \ref{rem.3}, we can easily determine $h(x)$ and $\deg(h(x))$ when a Hermitian self-orthogonal (extended) GRS code is given. 
      Hence, $\deg(h(x))$ in Theorem \ref{th.compare the magnitude} is indeed known.
    \end{enumerate}
  \end{remark}

  \begin{example}\label{example.thB.1}
    We list some known Hermitian self-orthogonal (extended) GRS codes and further explain the role of Theorem \ref{th.ConB.1}.
    \begin{enumerate}
    \item [1)] From Theorems 2.3, 2.5 of \cite{RefJ35} and Theorems 2, 3 of \cite{RefJ33}, there exists a $q$-ary Hermitian self-orthogonal GRS code of length $n$ if any of the following conditions holds:	
    \begin{itemize} 
    \item $t\mid (q-1)$, $r\leq \frac{q-1}{t}$, $k\leq \frac{t-1}{\sqrt{q}+1}$, $n=rt$ or $n=rt+1$;
    \item $2\leq n\leq q$, $n=n_1+n_2+\dots+n_t$ with $1\leq t\leq \sqrt{q}$ and $2\leq n_i\leq \sqrt{q}$ for all $1\leq i\leq t$, $1\leq k\leq \frac{\min\{n_1,\dots,n_t\}}{2}$;
    \item $1\leq m\leq \sqrt{q}$, $n=m(\sqrt{q}-1)$ and $1\leq k\leq \lfloor \frac{m\sqrt{q}-1}{\sqrt{q}+1} \rfloor$;
    \item $1\leq s\leq \sqrt{q}-1$, $n=s(\sqrt{q}+1)$ and $1\leq k\leq s-1$.
    \end{itemize}
    \item [2)] From Theorems 3.10, 3.11 of \cite{RefJ36}, Theorem 3.6 of \cite{RefJ38}, and Proposition 1 of \cite{RefJ24}, there exists a $q$-ary Hermitian self-dual or self-orthogonal extended GRS code of length $n+1$ if any of the following conditions holds:
    \begin{itemize} 
    \item  $n=2k-1$, $n\leq \sqrt{q}$, $\boldsymbol{a}=(a_1,a_2,\dots,a_n)\in A_l^n$, where the definition of $A_l^n$ see \cite{RefJ36};	
    \item  $n=2k-1$, $n\leq \sqrt{q}$, $\boldsymbol{a}=(a_1,a_2,\dots,a_n)\in A_{l,m}^n$, where the definition of $A_{l,m}^n$ see \cite{RefJ36};
    \item  $n=q$, $0\leq k\leq \sqrt{q}$;
    \item  $n=t(\sqrt{q}+1)+1$, $1\leq t\leq \sqrt{q}-1$ and $1\leq k\leq t+1$ with $(t,k)\neq (\sqrt{q}-1,\sqrt{q}-1)$.
    \end{itemize} 
   \end{enumerate}

   According to Theorem \ref{th.ConB.1} and Table \ref{tab:4}, if $h=6,\ 10,\ 14$, we can get twenty-four, forty and fifty-six classes of MDS codes with Galois hulls of arbitrary dimensions, respectively.
   \end{example}

  \begin{example}\label{example.thB.2}
    Similar to Example \ref{example.thA.1}, we show step by step how the new MDS codes with $e'$-Galois hulls of arbitrary dimensions are generated by known Hermitian self-orthogonal (extended) GRS codes.
    \begin{enumerate}
    \item [1)] New MDS codes with $e'$-Galois hulls of arbitrary dimensions can be derived from a Hermitian self-orthogonal GRS code as following steps: 
    
    $\bullet \textbf{Step 1. Obtain Hermitian self-orthogonal GRS codes.}$ 

    For example, taking $(q,m)=(3^6, 20)$, $h(x)=x^{7}\in \mathbb{F}_{3^6}[x]$ and $\lambda = 26\omega^{76}\in \mathbb{F}_{3^6}^*$ in Theorem 2 of \cite{RefJ33}, we can get $[520,k]_{3^6}$ 
    Hermitian self-orthogonal GRS codes for each $1\leq k\leq 19$. Consider the $[520,19]_{3^6}$ Hermitian self-orthogonal GRS code. 

    $\bullet \textbf{Step 2. Determine {\bf{deg}}(h(x)).}$ 

    Note that $\lambda u_ih(a_i)=v_i^{\sqrt{q}+1}$, $1\leq i\leq 520$ for $h(x)=x^7$ and $\lambda=26\omega^{76}$ in Theorem 2 of \cite{RefJ33}. Therefore, according to 
    Lemma \ref{coro.Hermitian self-orthogonal GRS and EGRS} and Theorem \ref{th.ConB.1}, $h(x)=x^7$ is the $h(x)$ we are looking for. Hence, $\deg(h(x))=7$. (Note that 
    we do not need to use the intermediate tool $g(x)$ at this time).

    $\bullet \textbf{Step 3. Derive new MDS codes with e'-Galois hulls of arbitrary dimensions.}$ 

    By the result 1) of Theorem \ref{th.ConB.1} and Table \ref{tab:4}, we can obtain MDS codes with $0,\ 2$ and $4$-Galois hulls of arbitrary dimensions as follows:  
    \begin{itemize}
    \item $[520, k_0]_{3^6}^{\textbf{**}}$ MDS code with $l$-dim $0$-Galois hull for $1\leq k_0\leq 256$, where $0\leq l\leq k_0$;	
    \item $[520, k_2]_{3^6}^{\textbf{**}}$ MDS code with $l$-dim $2$-Galois hull for $1\leq k_2\leq 52$, where $0\leq l\leq k_2$;
    \item $[520, k_4]_{3^6}$ MDS code with $l$-dim $4$-Galois hull for $1\leq k_4\leq 7$, where $0\leq l\leq k_4$.
    \end{itemize}

    \item [\rm 2)] New MDS codes with $e'$-Galois hulls of arbitrary dimensions can be derived from a Hermitian self-orthogonal extended GRS code as following steps: 
    
    $\bullet \textbf{Step 1. Obtain Hermitian self-orthogonal extended GRS codes.}$ 

    For example, taking $q=3^{10}$ and $(t, k)=(200, 100)$ in Proposition 1 of \cite{RefJ24}, we can get a $[48802,100]_{3^{10}}$ Hermitian self-orthogonal extended GRS code. 

    $\bullet \textbf{Step 2. Determine {\bf{deg}}(h(x)).}$ 

    Similar to Example \ref{example.thA.1}, since $h(x)=\lambda^{-1}g(x)$, we only need to determine the structure of $g(x)$. According to the proof of Proposition 1 of \cite{RefJ24}, 
    $g(x)=-m(x)^{\sqrt{q}+1}f^{\sqrt{q}}(x)$, where $m(x)\in \mathbb{F}_{3^{10}}[x]$ is a monic polynomial with $\deg(m(x))=t+1-k=101$. Taking $f(x)=1$ and $\lambda=-1$ further, 
    we have $h(x)=\lambda^{-1}g(x)=m(x)^{244}$. Hence, $\deg(h(x))=244\times 101=24644$.

    $\bullet \textbf{Step 3. Derive new MDS codes with e'-Galois hulls of arbitrary dimensions.}$ 

    By the result 2) of Theorem \ref{th.ConB.1} and Table \ref{tab:4}, we can obtain MDS codes with $0,\ 2,\ 4,\ 6$ and $8$-Galois hulls of arbitrary dimensions as follows:
  \begin{itemize}
  \item	$[48802, k_0]_{3^{10}}^{\textbf{**}}$ MDS code with $l$-dim $0$-Galois hull for $1\leq k_0\leq 12079$, where $0\leq l\leq k_0-1$;
  \item $[48802, k_2]_{3^{10}}^{\textbf{**}}$ MDS code with $l$-dim $2$-Galois hull for $1\leq k_2\leq 2416$, where $0\leq l\leq k_2-1$;
  \item $[48802, k_4]_{3^{10}}^{\textbf{**}}$ MDS code with $l$-dim $4$-Galois hull for $1\leq k_4\leq 295$, where $0\leq l\leq k_4-1$;
  \item $[48802, k_6]_{3^{10}}$ MDS code with $l$-dim $6$-Galois hull for $1\leq k_6\leq 34$, where $0\leq l\leq k_6-1$;
  \item $[48802, k_8]_{3^{10}}$ MDS code with $l$-dim $8$-Galois hull for $1\leq k_8\leq 4$, where $0\leq l\leq k_8-1$.
  \end{itemize}
  \end{enumerate}
\end{example}

\begin{remark}\label{rem6}

  We now discuss the dimensions of the new MDS codes with Galois hulls of arbitrary dimensions obtained in Example \ref{example.thB.2}. 
  Note that $\deg(h(x))=7$ in the result 1) of Example \ref{example.thB.2}. Hence, by Theorems \ref{th.ConB.1} and \ref{th.compare the magnitude}, we can obtain new MDS codes with 
  larger dimensions and $e'$-Galois hulls of arbitrary dimensions when $0\leq e'\leq \lfloor \log_p\frac{\sqrt{q}(n-3)-(\sqrt{q}+1)\deg(h(x))-1}{\sqrt{q}+n-1} \rfloor=
  \lfloor \log_3\frac{27\times(517)-(27+1)\times 7-1}{27+520-1} \rfloor=2$. For the result 2) of Example \ref{example.thB.2}, we note that $\deg(h(x))=24644$. Similarly, 
  we can obtain new MDS codes with larger dimensions and $e'$-Galois hulls of arbitrary dimensions when $0\leq e'\leq \lfloor \log_3\frac{243\times(48799)-(243+1)\times 24644-1}{243+48802-1} \rfloor=4$.  
  We have marked them in Example \ref{example.thB.2} with $\textbf{**}$.

\end{remark}
  
\subsection{Some explicit constructions of MDS codes with Galois hulls of arbitrary dimensions}\label{constructionC}

In this subsection, we construct two classes of MDS codes with Galois hulls of arbitrary dimensions. The conditions we use are relatively stringent. But applying Theorem \ref{th.ConA}, 
these newly obtained Galois self-orthogonal GRS codes will generate more MDS codes with Galois hulls of arbitrary dimensions. This means that, with the help of Theorem \ref{th.ConA}, many 
new MDS codes with Galois hulls of arbitrary dimensions can still be obtained by using stricter conditions. To this end, we need the following lemma.

\begin{lemma}\label{lem.ConC.1}
Let $m,h$ be two positive integers and $m\mid h$. Suppose that there is a positive integer $t$ such that $2^t\mid \frac{h}{m}$. Then for any element $x\in \mathbb{F}_{p^m}$, there exists $v\in \mathbb{F}_{p^h}$ such that $x=v^{2^t}$.
\end{lemma}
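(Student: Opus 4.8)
The plan is to translate the statement into a question about $2^t$-th powers in the cyclic group $\mathbb{F}_{p^h}^*$, and then into a single divisibility fact about $\frac{p^h-1}{p^m-1}$. First I would dispose of $x=0$ (take $v=0$) and assume $x\in\mathbb{F}_{p^m}^*$. Since $\mathbb{F}_{p^h}^*$ is cyclic of order $p^h-1$, an element $x$ is a $2^t$-th power there if and only if $x^{(p^h-1)/\gcd(2^t,\,p^h-1)}=1$, i.e.\ if and only if the multiplicative order of $x$ divides $\frac{p^h-1}{\gcd(2^t,\,p^h-1)}$. As $x\in\mathbb{F}_{p^m}^*$, its order divides $p^m-1$, so it suffices to show
\[
(p^m-1)\ \Big|\ \frac{p^h-1}{\gcd(2^t,\,p^h-1)}.
\]

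If $p=2$ this is immediate, because $p^h-1$ is odd, so $\gcd(2^t,p^h-1)=1$, and $p^m-1\mid p^h-1$ since $m\mid h$ (equivalently, $v\mapsto v^{2^t}$ is the $t$-fold iterate of the Frobenius automorphism, hence a bijection of $\mathbb{F}_{2^h}$). So the real content is the case $p$ odd. Put $d=h/m$, so that $2^t\mid d$ by hypothesis, and $y=p^m$; then $N:=\frac{p^h-1}{p^m-1}=1+y+y^2+\cdots+y^{d-1}$. The key claim is that $2^t\mid N$. Granting it, $2^t\mid N\mid p^h-1$ forces $\gcd(2^t,p^h-1)=2^t$, and then $\frac{p^h-1}{\gcd(2^t,p^h-1)}=(p^m-1)\cdot\frac{N}{2^t}$ is visibly a multiple of $p^m-1$, as desired.

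To prove the claim I would show, by induction on $s\ge 1$, that $2^s$ divides $\sum_{j=0}^{d-1}y^j$ whenever $y$ is odd and $2^s\mid d$. For $s=1$: each $y^j$ is odd, so the sum is $\equiv d\equiv 0\pmod 2$. For $s\ge 2$: since $d$ is even, pairing consecutive terms gives $\sum_{j=0}^{d-1}y^j=(1+y)\sum_{i=0}^{d/2-1}(y^2)^i$, where $1+y$ is even, $y^2$ is odd, and $2^{s-1}\mid d/2$; the induction hypothesis applied to $y^2$ and $d/2$ yields $2^{s-1}\mid\sum_{i=0}^{d/2-1}(y^2)^i$, and hence $2^s$ divides the product. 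Taking $s=t$ gives $2^t\mid N$.

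I do not anticipate a serious obstacle; the only point requiring care is to keep the $p=2$ case (equivalently $\gcd(2^t,p^h-1)=1$) separate, since there the divisibility claim about $N$ need not hold and is not needed. As an alternative to the elementary induction one could invoke the lifting-the-exponent lemma to get $v_2(N)=v_2(y+1)+v_2(d)-1\ge 1+t-1=t$ directly, but the self-contained induction seems preferable for the paper.
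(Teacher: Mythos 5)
Your proof is correct, and it takes a genuinely different route from the paper's. The paper's own proof is a two-line field-theoretic argument: writing $h=2^t\cdot sm$, it views $\mathbb{F}_{p^h}$ as the $2^ts$-th extension of $\mathbb{F}_{p^m}$ and then simply asserts that every $x\in\mathbb{F}_{p^m}$ is a $2^t$-th power in $\mathbb{F}_{p^h}$; the justification is left implicit (the natural reading is a tower $\mathbb{F}_{p^m}\subset\mathbb{F}_{p^{2m}}\subset\cdots\subset\mathbb{F}_{p^{2^tm}}\subseteq\mathbb{F}_{p^h}$ of quadratic extensions, in each step of which every element of the smaller field becomes a square since for odd $r$ one has $(r-1)\mid\frac{r^2-1}{2}$). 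You instead work in the cyclic group $\mathbb{F}_{p^h}^*$: you reduce the lemma to the divisibility $(p^m-1)\mid\frac{p^h-1}{\gcd(2^t,\,p^h-1)}$ and establish it, for odd $p$, by proving $2^t\mid\frac{p^h-1}{p^m-1}$ via the pairing induction (equivalently the lifting-the-exponent computation $v_2\bigl(\frac{p^h-1}{p^m-1}\bigr)=v_2(p^m+1)+v_2(h/m)-1\geq t$), with the $p=2$ case handled separately through the bijectivity of Frobenius. Each step of yours checks out, including the factorization $\sum_{j=0}^{d-1}y^j=(1+y)\sum_{i=0}^{d/2-1}(y^2)^i$ and the deduction $\gcd(2^t,p^h-1)=2^t$. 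What your version buys is a fully explicit, self-contained verification of precisely the implication the paper glosses over with ``it follows that,'' at the cost of a short 2-adic computation; the paper's version buys brevity and avoids arithmetic, but as written it leaves the key step unproved. Either argument would serve, and yours could even be compressed by quoting the norm map $v\mapsto v^{(p^h-1)/(p^m-1)}$, whose surjectivity plus your divisibility claim gives the result in one line.
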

\begin{proof}
Since $2^t\mid \frac{h}{m}$, there exists a positive integer $s$ such that $h=2^t\cdot sm$. Then $\mathbb{F}_{p^h}$ can be viewed as the 
$2^t\cdot s$-th extension field of $\mathbb{F}_{p^m}$. It follows that any element $x\in \mathbb{F}_{p^m}$ can be written as $x=v^{2^t}$, where $v\in \mathbb{F}_{p^h}$.
\end{proof}

Let notations be the same as before, we now consider the additive subgroup of $\mathbb{F}_{p^h}$ and its cosets. We know that 
$\mathbb{F}_{p^h}$ can be seen as a linear space over $\mathbb{F}_{p^{m}}$ of dimension $\frac{h}{m}$. Suppose $1\leq w\leq p^{m}$ 
and $1\leq z\leq \frac{h}{m}-1$. Let $H$ be an $\mathbb{F}_{p^m}$-subspace of $\mathbb{F}_{p^h}$ of dimension $z$. 
Choose $\eta \in \mathbb{F}_{p^h}\setminus H$. We label the elements of $\mathbb{F}_{p^m}$ as $\beta_1=0,\beta_2,\cdots,\beta_{p^m}$. For $1\leq j\leq w$, define
\[\begin{split}
H_j=H+\beta_j\eta=\left\{h+\beta_j\eta|\ h\in H\right\}.
\end{split} \]
Since $\beta_i\neq \beta_j$ for any $1\leq i\neq j\leq n$, then $H_i\cap H_j=\emptyset$.
Let $n=wp^{mz}$ and
\begin{align}\label{eq.ConC_ai}
    \bigcup_{j=1}^{w}H_j=\left\{a_1,a_2,\dots,a_n\right\}.
\end{align}

Let $a_i$ and $u_i$ be defined as in Eqs. (\ref{eq.ConC_ai}) and (\ref{equation_ui}), we have the following lemma which has been shown 
in \cite[]{RefJ7}.

\begin{lemma}\label{lem.ConC.2}(\cite{RefJ7}, Lemma 3.1)
	For a given $1\leq i\leq n$, suppose $a_i\in H_b$ for some $1\leq b\leq w$. Then
\[\begin{split}
u_i=(\prod_{h\in H,h\neq0}h^{-1})(\prod_{g\in H}(\eta-g)^{1-w})(\prod_{1\leq j\leq t,j\neq b}(\beta_b-\beta_j)^{-1}).
\end{split} \]
In particular, let $\varepsilon=(\prod_{h\in H,h\neq0}h)(\prod_{g\in H}(\eta-g)^{w-1})$, then $\varepsilon u_i\in \mathbb{F}^*_{p^m}$.
\end{lemma}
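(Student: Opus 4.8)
The plan is to compute the product $\prod_{1\le j\le n,\ j\ne i}(a_i-a_j)$ directly, grouping the indices $j$ according to which of the cosets $H_1,\dots,H_w$ the point $a_j$ lies in, and then to invert. First I would fix $i$ and write $a_i=g_0+\beta_b\eta$ with $g_0\in H$, so that $a_i\in H_b$. The indices $j\ne i$ with $a_j\in H_b$ contribute $\prod_{g\in H,\ g\ne g_0}\big((g_0+\beta_b\eta)-(g+\beta_b\eta)\big)=\prod_{g\in H,\ g\ne g_0}(g_0-g)$; since $g\mapsto g_0-g$ is a bijection of $H$ sending $g_0$ to $0$, this equals $\prod_{h\in H,\ h\ne0}h$. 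For $l\ne b$ we have $a_i\notin H_l$ (the cosets are disjoint), so all of $H_l$ occurs, contributing $\prod_{g\in H}\big((g_0-g)+(\beta_b-\beta_l)\eta\big)=\prod_{h\in H}\big(h+(\beta_b-\beta_l)\eta\big)$, again by the same bijection of $H$.

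The one step that is more than bookkeeping is the evaluation of $\prod_{h\in H}(h+c\eta)$ for $c=\beta_b-\beta_l\in\mathbb{F}_{p^m}^*$. Reindexing $h\mapsto-h$ turns it into $\prod_{h\in H}(c\eta-h)$, and because $H$ is an $\mathbb{F}_{p^m}$-subspace one has $c^{-1}H=H$, whence $\prod_{h\in H}(c\eta-h)=c^{|H|}\prod_{h\in H}(\eta-h)$. Since $|H|=p^{mz}$ and $c\in\mathbb{F}_{p^m}$ we get $c^{|H|}=c$, so $\prod_{h\in H}(h+c\eta)=c\prod_{h\in H}(\eta-h)$; equivalently this says the subspace polynomial $\prod_{h\in H}(x-h)$ is $\mathbb{F}_{p^m}$-linearized. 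Note that $\eta\notin H$ guarantees none of these factors vanishes, so everything in sight is invertible.

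Assembling the contributions of all $w$ cosets then gives
\[
\prod_{1\le j\le n,\ j\ne i}(a_i-a_j)=\Big(\prod_{h\in H,\ h\ne0}h\Big)\Big(\prod_{\substack{1\le l\le w\\ l\ne b}}(\beta_b-\beta_l)\Big)\Big(\prod_{g\in H}(\eta-g)\Big)^{w-1},
\]
and inverting yields precisely the claimed expression for $u_i$. For the last assertion I would simply set $\varepsilon=\big(\prod_{h\in H,\ h\ne0}h\big)\big(\prod_{g\in H}(\eta-g)^{w-1}\big)$ and read off $\varepsilon u_i=\prod_{1\le l\le w,\ l\ne b}(\beta_b-\beta_l)^{-1}$, which lies in $\mathbb{F}_{p^m}^*$ because each $\beta_b-\beta_l$ is a nonzero element of $\mathbb{F}_{p^m}$. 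The main obstacle, such as it is, is the identity $\prod_{h\in H}(c\eta-h)=c\prod_{h\in H}(\eta-h)$; the remainder is careful reindexing and collecting of factors.
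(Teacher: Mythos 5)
Your proof is correct and follows essentially the same (standard) argument as the cited source \cite{RefJ7}, which the paper invokes without reproducing a proof: group the factors of $\prod_{j\neq i}(a_i-a_j)$ by the cosets $H_l$ and use that $H$ is an $\mathbb{F}_{p^m}$-subspace, so $\prod_{h\in H}(c\eta-h)=c^{|H|}\prod_{h\in H}(\eta-h)=c\prod_{h\in H}(\eta-h)$ for $c\in\mathbb{F}_{p^m}^*$. You also correctly treat the index bound ``$1\leq j\leq t$'' in the statement as ``$1\leq j\leq w$'', which is a typo inherited from the original lemma's notation.
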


The following lemma can be derived directly by Lemmas \ref{lem.ConC.1} and \ref{lem.ConC.2}.

\begin{lemma}\label{coro.ConC.1}
Let notations be the same as before, for any $\varepsilon u_i\in \mathbb{F}^*_{p^m}$, there exists $v_i\in \mathbb{F}_{p^h}^*$ such that 
\begin{align}\label{eq.ConC.ui}
    \varepsilon u_i=v_i^{2^t}.    
\end{align}
\end{lemma}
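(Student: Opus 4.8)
The plan is to obtain the statement as a one-line combination of the two preceding lemmas. By Lemma~\ref{lem.ConC.2} we already know that $\varepsilon u_i$ is a \emph{nonzero} element of the subfield $\mathbb{F}_{p^m}\subseteq\mathbb{F}_{p^h}$; in particular it is an element of $\mathbb{F}_{p^m}$ to which Lemma~\ref{lem.ConC.1} applies. The standing hypothesis of this subsection is exactly $m\mid h$ together with $2^t\mid\frac{h}{m}$, which is the hypothesis of Lemma~\ref{lem.ConC.1}. So the first and essentially only step is to invoke Lemma~\ref{lem.ConC.1} with $x=\varepsilon u_i$, producing some $v_i\in\mathbb{F}_{p^h}$ with $\varepsilon u_i=v_i^{2^t}$, which is precisely Eq.~(\ref{eq.ConC.ui}).

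It then remains only to observe that $v_i$ can be taken in $\mathbb{F}_{p^h}^*$ rather than merely in $\mathbb{F}_{p^h}$. This is immediate: since $\varepsilon u_i\in\mathbb{F}_{p^m}^*$ we have $\varepsilon u_i\neq 0$, and $v_i=0$ would force $v_i^{2^t}=0=\varepsilon u_i$, a contradiction. Hence $v_i\in\mathbb{F}_{p^h}^*$ and the lemma follows.

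There is no real obstacle here; all of the substance has been front-loaded into Lemma~\ref{lem.ConC.1} (existence of $2^t$-th roots inside the large field $\mathbb{F}_{p^h}$ for elements of $\mathbb{F}_{p^m}$, via the fact that $\mathbb{F}_{p^h}$ is the $2^t s$-th extension of $\mathbb{F}_{p^m}$ when $h=2^t sm$) and into Lemma~\ref{lem.ConC.2} (the explicit formula for $u_i$ forcing $\varepsilon u_i\in\mathbb{F}_{p^m}^*$). The one point I would make sure to stress in the write-up is that $v_i$ is allowed to range over all of $\mathbb{F}_{p^h}$, not just over $\mathbb{F}_{p^m}$: this is the whole reason for ascending from $\mathbb{F}_{p^m}$ to $\mathbb{F}_{p^h}$, and it is what later allows the column multipliers $v_i$ of the GRS code to be genuine elements of $\mathbb{F}_{p^h}^*$ so that Theorem~\ref{th.ConC} can be applied. (One could alternatively argue purely inside the cyclic group $\mathbb{F}_{p^h}^*$, but citing Lemma~\ref{lem.ConC.1} is shorter, so I would simply do that.)
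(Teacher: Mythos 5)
Your proposal matches the paper exactly: the paper gives no separate argument, stating only that the lemma follows directly from Lemmas \ref{lem.ConC.1} and \ref{lem.ConC.2}, which is precisely the combination you spell out (Lemma \ref{lem.ConC.2} gives $\varepsilon u_i\in\mathbb{F}_{p^m}^*$, Lemma \ref{lem.ConC.1} supplies the $2^t$-th root in $\mathbb{F}_{p^h}$, and nonvanishing of $\varepsilon u_i$ forces $v_i\neq 0$). Your write-up is correct and, if anything, slightly more careful than the paper about the $v_i\in\mathbb{F}_{p^h}^*$ point.
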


\begin{theorem}\label{th.ConC}
Let $p$ be an odd prime. Let $m, h$ be two positive integers with $m\mid h$ and $q=p^h$. Let $n=w p^{mz}$, where $1\leq w\leq p^m$ and $1\leq z\leq \frac{h}{m}-1$. Suppose that $t$ is a positive integer such that $2^t\mid \frac{h}{m}$ and $2^t=p^e+1$ for some $0\leq e\leq h-1$. Then the following hold.
\begin{enumerate}
\item [\rm 1)]For $1\leq k\leq \lfloor \frac{p^e+n-1}{p^e+1} \rfloor$, there exists an $[n,k]_q$ MDS code $\C$ with $\dim(\Hull_e(\C))=l$, where $0 \leq l\leq k$.	
\item [\rm 2)] For $1\leq k\leq \lfloor \frac{p^e+n-1}{p^e+1} \rfloor$, there exists an $[n+1,k]_q$ MDS code $\C$ with $\dim(\Hull_e(\C))=l$, where $0 \leq l\leq k-1$.
\end{enumerate}
\end{theorem}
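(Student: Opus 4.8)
The plan is to run the argument of Theorem~\ref{th.ConA} essentially verbatim, taking the auxiliary monic polynomial $h(x)$ to be the constant $1$ (so $\deg h(x)=0$, which is exactly why the dimension bound reads $\lfloor\frac{p^e+n-1}{p^e+1}\rfloor$) and taking the scalar $\lambda$ to be the element $\varepsilon$ produced by Lemma~\ref{lem.ConC.2}. The coset lemmas \ref{lem.ConC.1}--\ref{coro.ConC.1} will play the role that the $\gcd$ computation plays inside Theorem~\ref{th.ConA}: instead of converting $(p^e+1)$-st powers into $(p^{e'}+1)$-st powers, here we directly extract $(p^e+1)$-st roots since $p^e+1=2^t$.

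First I would fix the code locators $a_1,\dots,a_n$ to be the elements of $\bigcup_{j=1}^w H_j$ as in Eq.~(\ref{eq.ConC_ai}), so that $n=wp^{mz}$, and set $\varepsilon=(\prod_{h\in H,\,h\neq 0}h)(\prod_{g\in H}(\eta-g)^{w-1})\in\mathbb{F}_q^*$, which by Lemma~\ref{lem.ConC.2} satisfies $\varepsilon u_i\in\mathbb{F}_{p^m}^*$ for every $i$. Invoking Lemma~\ref{coro.ConC.1} together with the hypothesis $2^t=p^e+1$ then gives, for each $i$, an element $v_i\in\mathbb{F}_q^*$ with $\varepsilon u_i=v_i^{2^t}=v_i^{p^e+1}$, i.e., $\varepsilon u_i h(a_i)=v_i^{p^e+1}$ with $h(x)\equiv 1$. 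Next I would check that $\alpha$ can be chosen as in Theorem~\ref{th.ConA}: since $t\geq 1$ and $2^t\mid\frac{h}{m}$ we have $p^e+1=2^t\leq\frac{h}{m}\leq h<p^h-1=q-1$ (using $p\geq 3$), hence $(q-1)\nmid(p^e+1)$ and there is $\alpha\in\mathbb{F}_q^*$ with $\beta:=\alpha^{p^e+1}\neq 1$.

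With these data the two parts proceed exactly as in Theorem~\ref{th.ConA}. For 1), given $k\leq\lfloor\frac{p^e+n-1}{p^e+1}\rfloor$ and $0\leq l\leq k$, set $s=k-l$, $\boldsymbol{v}'=(\alpha v_1,\dots,\alpha v_s,v_{s+1},\dots,v_n)$, and $\C=\GRS_k(\boldsymbol{a},\boldsymbol{v}')$, an $[n,k]_q$ MDS code; for a codeword attached to $f$ with $\deg f\leq k-1$, result~1) of Lemma~\ref{lem2.1} produces $g$ with $\deg g\leq n-k-1$, and replacing $(v_i')^{p^e+1}$ by $\varepsilon u_i$ (resp.\ $\beta\varepsilon u_i$) for $i>s$ (resp.\ $i\leq s$) reduces the condition to $\varepsilon f^{p^e}(a_i)=g(a_i)$ on the last $n-s$ locators and $\beta\varepsilon f^{p^e}(a_i)=g(a_i)$ on the first $s$. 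The bound on $k$ gives $p^e(k-1)\leq n-k-1$, so $\deg(\varepsilon f^{p^e}-g)\leq n-k-1<n-s$ while this polynomial vanishes at the $n-s$ locators $a_{s+1},\dots,a_n$; hence $g=\varepsilon f^{p^e}$, and then $\beta\neq 1$ forces $f(a_i)=0$ for $1\leq i\leq s$, so $f=r(x)\prod_{i=1}^s(x-a_i)$ with $\deg r\leq l-1$, a space of dimension $l$. The converse (take $g=\varepsilon f^{p^e}$ and apply Lemma~\ref{lem2.1}) shows this space is precisely the hull, giving $\dim(\Hull_e(\C))=l$. Part 2) is identical with $s=k-l-1$ and $\C=\GRS_k(\boldsymbol{a},\boldsymbol{v}',\infty)$, the only new point being that the $x^{n-k}$ coefficient appearing in result~2) of Lemma~\ref{lem2.1} forces $f_{k-1}=0$ (otherwise $p^e(k-1)=n-k$, contradicting the range of $k$), after which the same computation yields $\dim(\Hull_e(\C))=l$ for $0\leq l\leq k-1$.

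I do not anticipate a genuine obstacle here: the substantive work is already done in Lemmas~\ref{lem.ConC.1}--\ref{coro.ConC.1}, which manufacture the $2^t$-th roots $v_i$, and in the proof scheme of Theorem~\ref{th.ConA}. The only items needing care are the two elementary degree estimates that make the ``a polynomial with more roots than its degree is zero'' step valid (identical to those in Theorem~\ref{th.ConA} with $\deg h(x)=0$) and the inequality $p^e+1<q-1$ that legitimizes the choice of $\alpha$; the former are routine, and the latter is exactly where the hypothesis $2^t\mid\frac{h}{m}$ is used.
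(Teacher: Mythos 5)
Your proposal is correct and follows essentially the same route as the paper: the paper's own proof also extracts the $(p^e+1)=2^t$-th roots $v_i$ of $\varepsilon u_i$ via Lemmas \ref{lem.ConC.2} and \ref{coro.ConC.1} and then repeats the argument of Theorems \ref{th.ConA} and \ref{th.ConB.1} with $h(x)=1$ and $\lambda=\varepsilon$ (including the choice of $\alpha$ with $\alpha^{p^e+1}\neq 1$ and the degree-counting and $f_{k-1}=0$ steps you spell out). Your explicit verification that $(q-1)\nmid(p^e+1)$ is a welcome but minor addition to what the paper leaves implicit.
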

\begin{proof}
1) Let notations be the same as before. By Lemma \ref{coro.ConC.1}, there exists $v_i\in \mathbb{F}_{p^h}^*$ such that $\varepsilon u_i=v_i^{2^t}$ for $1\leq i\leq n$. Since $(p^h-1)\nmid (p^e+1)$ for any $0\leq e\leq h-1$, there is an $\alpha \in \mathbb{F}_{p^h}^*$ such that $\beta=\alpha^{p^e+1}\neq 1$. Let $\boldsymbol{a}=(a_1,a_2,\dots,a_n)$ and $\boldsymbol{v}=(\alpha v_1, \dots, \alpha v_s,v_{s+1},\dots,v_n)$, where $s=k-l\leq k$. We now consider the $q$-ary MDS code $\C=\GRS_k(\boldsymbol{a},\boldsymbol{v})$. 

Note that $2^t=p^e+1$, similar to the proofs of the result 1) of Theorem \ref{th.ConA} and the result 1) of Theorem \ref{th.ConB.1}, we can prove that $\dim(\Hull_e(\C))=l$, where $0\leq l\leq k$.

2) Let $s=k-l-1\leq k-1$. We now consider the $q$-ary MDS code $\C=\GRS_k(\boldsymbol{a},\boldsymbol{v},\infty)$. Taking a similar way as the result 2) of Theorem \ref{th.ConA} and the result 2) of Theorem \ref{th.ConB.1}, we can also easily deduce that $\dim(\Hull_e(\C))=l$, where $0\leq l\leq k-1$. 
\end{proof}

\begin{table}[!htb]
  \centering
  \caption{Some examples satisfying the conditions $2^t=p^e+1$ and $2^t\mid \frac{h}{m}$ for $e=1$}
  \label{tab:5}       
  \begin{center}
    \begin{tabular}{ccccc|ccccc}
      \hline
      $p$ & $t$ & $m$ & $h$ & $q$ & $p$ & $t$ & $m$ & $h$ & $q$\\
      \hline
        3 & 2 & 1 & 4 & 81 & 3 & 2 & 2 & 8 & 6561 \\
        3 & 2 & 3 & 12 & 531441 & 3 & 2 & 4 & 16 & 43046721 \\
        3 & 2 & 5 & 20 & 3586784401 & 3 & 2 & 6  & 24 & 282429536481 \\
        3 & 2 & 7 & 28 & 22876792454961 & 3 & 2 & 8 &  32 & 1853020188851841 \\
        7 & 3 & 1 & 8 & 5764801 & 7 & 3 & 2 & 16 & 33232930569601 \\
        7 & 3 & 3 & 24 & 191581231380566414401 & 7 & 3 & 4 & 32 & 110442767424392064630529920 \\
      \hline
    \end{tabular}
  \end{center}
  \end{table}

\begin{example}\label{example.ConC.1}
In order to illustrate the practical significance of Theorem \ref{th.ConC}, we list some examples satisfying the condition $2^t=p^e+1$ and $2^t\mid \frac{h}{m}$ for $e=1$ in Table \ref{tab:5}. 
We explain that our constructions are new and flexible by the following comparisons.
\begin{enumerate}
\item [\rm 1)]From Table \ref{tab:1}, we can see that Classes $12,\ 13$ and $17$ have similar conditions with our results in Theorem \ref{th.ConC}. However, it is not difficult to find 
that $a\mid e$ (i.e., $m\mid e$) in Classes $12$ and $13$ or $t\mid (p^e-1)$ (i.e., $w\mid (p^e-1)$) in Class $17$ do not need to be satisfied in our constructions. Hence our constructions 
will get some codes with new lengths. For example, taking $(p,t,m,h,e)=(7,3,2,16,1)$ in Theorem \ref{th.ConC}, we have $n=w\cdot 49^z$, 
where $1\leq w\leq 49$, $1\leq z\leq 7$. Then applying Theorem \ref{th.ConC}, we still can get many MDS codes with $1$-Galois hulls of arbitrary dimensions, but Classes $12$ and $13$ will not 
be able to obtain these MDS codes because of $a\nmid e$ (since $a=2$, $e=1$ here, but $2\nmid 1$). And Class $17$ can only take $n=t\cdot7^r$, where $t=1,\ 2,\ 3,\ 6$ and $1\leq r\leq 15$. 
Cleaarly, most code lengths in Theorem \ref{th.ConC} are new.
\item [\rm 2)] Although we only list the case where $e=1$ in Table \ref{tab:5}, according to the result 1) of Theorem \ref{th.ConA}, we can get more new MDS codes with Galois hulls of arbitrary dimensions.
For example, taking $(p,t,m,h,e)=(3,2,1,8,1)$, we can obtain $[w\cdot3^z,k]_{3^8}$ $1$-Galois self-orthogonal GRS codes (i.e., taking $l=k$ in the result 1) of Theorem \ref{th.ConC}), where 
$1\leq w\leq 3$, $1\leq z\leq 7$ and $1\leq k\leq \lfloor \frac{w\cdot3^z+2}{4} \rfloor$. Similar to Example \ref{example.thA.1}, one can easily determine that $h(x)=1$ and $\deg(h(x))=0$. Then by 
the result 1) of Theorem \ref{th.ConA} and Table \ref{tab:3}, we can obtain $[w\cdot3^z,k_1]_{3^8}$ MDS codes with $3$-Galois hulls of arbitrary dimensions, $[w\cdot3^z,k_2]_{3^8}$ 
MDS codes with $5$-Galois hulls of arbitrary dimensions and $[w\cdot3^z,k_3]_{3^8}$ MDS codes with $7$-Galois hulls of arbitrary dimensions, where $1\leq k_1\leq \lfloor \frac{w\cdot3^z+26}{28} \rfloor$, 
$1\leq k_2\leq \lfloor \frac{w\cdot3^z+242}{244} \rfloor$ and $1\leq k_3\leq \lfloor \frac{w\cdot3^z+2186}{2188} \rfloor$. Finally, by Example \ref{example.thA.1} agian, these codes are also new. 
This fact shows that relatively strict conditions can also lead to many new classes of MDS codes with Galois hulls of arbitrary dimensions with the help of Theorem \ref{th.ConA}.
\end{enumerate}
\end{example}

\begin{remark}\label{rem7}
    We further discuss the condition $2^t=p^e+1$ when $e$ takes different values.
    \begin{enumerate}
    \item [\rm 1)] Take $e=0$, then $t\equiv 1$ and the results produced by the direct application of Theorem \ref{th.ConC} are exactly the conclusions 
    of Euclidean hull studied in the results (i) and (ii) of Theorem 3.3 of \cite[]{RefJ7}. Hence, Theorem \ref{th.ConC} is actually a generalization of \cite[]{RefJ7}.
    \item [\rm 2)] Take $e=1$, then $p=2^t-1$, which is the famous Mersenne prime. And $t$ must be a prime in this case. As we know, there is a well known 
    conjecture that the number of Mersenne primes is infinite. 
    \item [\rm 3)] Take $2\leq e\leq h-1$, then $p^e=2^t-1$, which implies that the odd prime $p$ should be the unique prime factor of a Mersenne composite number. 
    In fact, we can easily check that if $e$ is even, there is no $p$ satisfying the condition, but if $e$ is odd, this maybe an open problem because of the difficultity of decomposition 
    of Mersenne composite numbers.  
\end{enumerate}
For more information about Mersenne primes, we refer to \cite[]{RefJ52,RefJ51,RefJ50,RefJ53} and references therein.
\end{remark}

\section{Summary and concluding remarks}\label{sec4}
In this paper, three different methods are used to construct six new classes of MDS codes with Galois hulls of arbitrary dimensions (See Theorems \ref{th.ConA}, \ref{th.ConB.1} and \ref{th.ConC}). 
Specifically, the first two general methods allow us to construct more MDS codes with Galois hulls of arbitrary dimensions from Hermitian self-orthogonal (extended) GRS codes and general Galois 
self-orthogonal (extended) GRS codes. As stated in Theorem \ref{th.compare the magnitude}, the new MDS codes with Galois hulls of arbitrary dimensions derived from Hermitian self-orthogonal (extended) 
GRS codes have larger dimensions in some cases. 

In the third method, we use a relatively strict condition $2^t=p^e+1$ to present two explicit constructions of MDS codes with Galois hulls of arbitrary dimensions. In particular, one of them can derive 
$1$-Galois self-orthogonal GRS codes, and with the help of Theorem \ref{th.ConA}, more MDS codes with Galois hulls of arbitrary dimensions can be obtained from them directly (See Example \ref{example.ConC.1}). 
This fact shows that in our study, some relatively strict conditions can also lead to many new classes of MDS codes with Galois hulls of arbitrary dimensions. 

As one can see, in our constructions, the determination of the dimensions of the new MDS codes with Galois hulls of arbitrary dimensions derived from known Galois or Hermitian self-orthogonal (extended) GRS codes 
depends on $\deg(h(x))$. Fortunately, according to Remarks \ref{rem.1} and \ref{rem.3}, $\deg(h(x))$ is easy to be determined, whether it is calculated by Lagrange Interpolation Formula, or obtained directly from previous 
research results. 

In conclusion, the methods proposed in this paper are convenient and efficient in constructing MDS codes with Galois hulls of arbitrary dimensions. For future research, it might be interesting to construct more 
Galois self-orthogonal (extended) GRS codes.

\section*{Acknowledgments}
This research was supported by the National Natural Science Foundation of China (Nos.U21A20428 and 12171134).


\begin{thebibliography}{}
  \bibitem{RefJ1} E.F. Assmus, J.D. Key, Designs and Their Codes, Cambridge Univ, Press. (1993),103.


  \bibitem{RefJ14} T. Brun, I. Devetak, M. Hsieh, Correcting quantum errors with entanglement, Science. 314(5798) (2006),436-439.


  \bibitem{RefJ-1} A. Calderbank, P. Shor, Good quantum error-correcting codes exist, Phys. Rev. A. 54(2) (1996),1098-1105.
  \bibitem{RefJ100} C. Carlet, S. Guilley, Complementary dual codes for counter-measures to side-channel attacks, Coding Theory and Applications, Springer, Cham. (2015),97-105.
  \bibitem{RefJ29} M. Cao, MDS Codes With Galois Hulls of Arbitrary Dimensions and the Related Entanglement-Assisted Quantum Error Correction, IEEE Trans. Inf. Theory. 67(12) (2021),7964-7984.
  \bibitem{RefJ49} H. Chen, New MDS Entanglement-Assisted Quantum Codes from MDS Hermitian Self-Orthogonal Codes, arXiv:2206.13995 [cs.IT]. (2022).
  \bibitem{RefJ21} X. Chen, S. Zhu, X. Kai, Entanglement-assisted quantum MDS codes constructed from constacyclic codes, Quantum Inf. Process. 17(10) (2018),273.


  \bibitem{RefJ30} X. Fang, R. Jin, J. Luo, W. M, New Galois Hulls of GRS Codes and Application to EAQECCs, Cryptogr. Commun. 14 (2022),145-159.
  \bibitem{RefJ92} Y. Fu, H. Liu, Galois self-orthogonal constacyclic codes over finite fields, Des. Codes Cryptogr. (2021),1-31.
  \bibitem{RefJ95} Y. Fan, L. Zhang, Galois self-dual constacyclic codes, Des. Codes Cryptogr. 84(3) (2017),473-492.
  \bibitem{RefJ24} W. Fang, F. Fu, Two new classes of quantum MDS codes, Finite Fields Appl. 53 (2018),85-98.
  \bibitem{RefJ7} W. Fang, F. Fu, L. Li, S. Zhu, Euclidean and hermitian hulls of mds codes and their applications to eaqeccs, IEEE Trans. Inf. Theory. 66(6) (2020),3527-3527.


  \bibitem{RefJ52} D.B. Gillies, Three new Mersenne primes and a statistical theory, Math. Comput. 18(85) (1964),93-97.
  \bibitem{RefJ36} G. Guo, R. Li, Hermitian Self-Dual GRS and Extended GRS Codes, IEEE Commun. Lett. 25(4) (2021),1062-1065.
  \bibitem{RefJ33} G. Guo, R. Li, Y. Liu, Application of Hermitian self-orthogonal GRS codes to some quantum MDS codes, Finite Fields Appl. 76 (2021).
  \bibitem{RefJ4} K. Guenda, S. Jitman, T. A. Gulliver, Constructions of good entanglement-assisted quantum error correcting codes, Des. Codes Cryptogr. 86 (2018),121-136. 
  \bibitem{RefJ5} N. Gao, J. Li, S. Huang, Hermitian Hulls of Constacyclic Codes and Their Applications to Quantum Codes, Int. J. Theor. Phys. 61(3) (2022),1-14.
  \bibitem{RefJ51} R.K. Guy, Unsolved Problems in Number Theory, 3rd ed., Springer, New York. (2004).


  \bibitem{RefJ50} G. H. JO, D. KIM, MERSENNE PRIME FACTOR AND SUM OF BINOMIAL COEFFICIENTS, Journal of applied mathematics and informatics. $40(1_2)$ (2022),61-68.
  \bibitem{RefJ38} L. Jin, C. Xing, A Construction of New Quantum MDS Codes, IEEE Trans. Inf. Theory. 60(5) (2014),2921-2925.
  \bibitem{RefJ35} L. Jin, S. Ling, J. Luo, C. Xing, Application of classical Hermitian self-orthogonal MDS codes to quantum MDS codes, IEEE Trans. Inf. Theory. 56(9) (2010),4735-4740.



  \bibitem{RefJ37} C. Li, C. Ding, S. Li, LCD Cyclic Codes Over Finite Fields, IEEE Trans. Inf. Theory. 63(7) (2017),4344-4356.
  \bibitem{RefJ19} G. Luo, X. Cao, Two new families of entanglement-assisted quantum MDS codes from generalized Reed-Solomon codes, Quantum Inf. Process. 18(3) (2019),89
  \bibitem{RefJ6} G. Luo, X. Cao, X. Chen, MDS codes with hulls of arbitrary dimensions and their quantum error correction, IEEE Trans. Inf. Theory. 65(5) (2018),2944-2952.
  \bibitem{RefJ91} H. Liu, P. Xu, Galois hulls of linear codes over finite fields, Des. Codes Cryptogr. 88 (2) (2020),241-255.
  \bibitem{RefJ16} L. Li, S. Zhu, L. Liu, and X. Kai, Entanglement-assisted quantum MDS codes from generalized Reed-Solomon codes, Quantum Inf. Process. 18(5) (2019),153.
  \bibitem{RefJ8} J. Leon, Computing automorphism groups of errorcorrecting codes, IEEE Trans. Inf. Theory. 28(3) (1982),496-511.
  \bibitem{RefJ9} J.S. Leon, Permutation group algorithms based on partition, Theory and algorithms. J. Symb. Comput. 12 (1991),533-583.
  \bibitem{RefJ27} X. Liu, H. Liu, L. Yu, New EAQEC codes constructed from Galois LCD codes, Quantum Inf. Process. 19(20) (2020).
  \bibitem{RefJ18} Y. Liu, R. Li, L. Lv, and Y. Ma, Application of constacyclic codes to entanglement-assisted quantum maximum distance separable codes, Quantum Inf. Process. 17(8) (2018),210.


  \bibitem{RefJ53} L. Murata, C. Pomerance, On the largest prime factor of a Mersenne number, Number theory. 36 (2004),209-218.


  \bibitem{RefJ94} L. Qian, X. Cao, X. Wu, W. Lu, Entanglement-assisted quantum codes from l-Galois hulls MDS codes of arbitrary dimensions, preprint. (2019).


  \bibitem{RefJ-2} A. M. Steane, Error correcting codes in quantum theory, Phys. Rev. Lett. 77(5) (1996),793-797.
  \bibitem{RefJ93} A. Sharma, V. Chauhan, Skew multi-twisted codes over finite fields and their Galois duals, Finite Fields Appl. 59 (2019),297-334.
  \bibitem{RefJ10} N. Sendrier, Finding the permutation between equivalent binary code, in Proceedings of IEEE ISIT 1997, Ulm, Germany. (1997),367.
  \bibitem{RefJ11} N. Sendrier, Finding the permutation between equivalent codes: The support splitting algorithm, IEEE Trans. Inf. Theory. 46(4) (2000),1193-1203.

  \bibitem{RefJ110} G. Wang, C. Tang, Some constructions of optimal subsystem codes derived  GRS codes, Quantum Inf. Process. 21(8) (2022),1-16.
  \bibitem{RefJ111} G. Wang, C. Tang, Some entanglement-assisted quantum MDS codes with large minimum distance, Quantum Inf. Process. 21(8) (2022),1-20.
  \bibitem{RefJ17} L. Wang, S. Zhu, New quantum MDS codes derived from constacyclic codes, Quantum Inf. Process. 14(3) (2015),881-889.
  \bibitem{RefJ90} Y. Wu, C. Li, S. Yang, New Galois hulls of generalized Reed-Solomon codes, Finite Fields Appl. 83 (2022),102084.
\end{thebibliography}
\end{document}